\renewcommand*\env@matrix[1][*\c@MaxMatrixCols c]{%
  \hskip -\arraycolsep
  \let\@ifnextchar\new@ifnextchar
  \array{#1}}
\crefname{algocf}{Algorithm}{Algorithms}
\Crefname{algocf}{Algorithm}{Algorithms}
\crefname{algocfline}{Line}{Lines}
\Crefname{algocfline}{Line}{Lines}
\renewcommand{\backref}[1]{}
\renewcommand{\backrefalt}[4]{%
\ifcase #1 %
\or
[p.\ #2]%
\else
[pp.\ #2]%
\fi}
\newtheorem{theorem}{Theorem}[section]
\newtheorem*{namedtheorem}{\theoremname}
\newcommand{\theoremname}{testing}
\newtheorem{lemma}[theorem]{Lemma}
\newtheorem{claim}[theorem]{Claim}
\newtheorem{fact}[theorem]{Fact}
\newtheorem{corollary}[theorem]{Corollary}
\theoremstyle{definition}
\newtheorem{definition}[theorem]{Definition}
\newcommand{\E}{\mathop{\bf E\/}}
\newcommand{\tr}{\mathrm{tr}}  
\newcommand{\poly}{\mathrm{poly}}
\newcommand{\polylog}{\mathrm{polylog}}
\newcommand{\dist}{\mathrm{dist}}
\newcommand{\R}{\mathbb R}
\newcommand{\C}{\mathbb C}
\newcommand{\Z}{\mathbb Z}
\newcommand{\F}{\mathbb F}
\newcommand{\eps}{\varepsilon}
\newcommand{\calZ}{\mathcal{Z}}
\newcommand{\weyl}{\mathrm{Weyl}}
\newcommand{\abs}[1]{\lvert #1 \rvert}
\newcommand{\norm}[1]{\lVert #1 \rVert}
\newcommand{\ketbra}[2]{\ket{#1}\!\!\bra{#2}}
\newcommand{\sympcomp}{\perp}
\newcommand{\ignore}[1]{}
\newcommand{\anote}[1]{}
\newcommand{\jnote}[1]{}
\newcommand{\hnote}[1]{}
\newcounter{termcounter}[equation]
\renewcommand{\thetermcounter}{\the\numexpr\value{equation}+1\relax.\roman{termcounter}}
\crefname{term}{term}{terms}
\def\term{\@ifnextchar[\term@optarg\term@noarg}%
\def\term@optarg[#1]#2{%
  \textup{#1}%
  \def\@currentlabel{#1}%
  \def\cref@currentlabel{[][2147483647][]#1}%
  \cref@label[term]{#2}}
\def\term@noarg#1{%
  \refstepcounter{termcounter}%
  \textup{\thetermcounter}%
  \cref@label[term]{#1}}
\newcommand{\entropy}{\mathsf{S}}
\newcommand{\binentropy}{\mathsf{H}}
\title{Pseudoentanglement Ain't Cheap}
\author{Sabee Grewal\thanks{\texttt{sabee@cs.utexas.edu}. The University of Texas at Austin.} \and Vishnu Iyer\thanks{\texttt{vishnu.iyer@utexas.edu}. The University of Texas at Austin.}\and William Kretschmer\thanks{\texttt{kretsch@berkeley.edu}. Simons Institute for the Theory of Computing, UC Berkeley.} \and Daniel Liang\thanks{\texttt{dl88@rice.edu}. Rice University.}}
\date{}
\begin{document}

\maketitle

\begin{abstract}
We show that any pseudoentangled state ensemble with a gap of $t$ bits of entropy requires $\Omega(t)$ non-Clifford gates to prepare.
This bound is tight up to polylogarithmic factors if linear-time quantum-secure pseudorandom functions exist.  
Our result follows from a polynomial-time algorithm to estimate the entanglement entropy of a quantum state across any cut of qubits. %
When run on an $n$-qubit state that is stabilized by at least $2^{n-t}$ Pauli operators, our algorithm produces an estimate that is within an additive factor of $\frac{t}{2}$ bits of the true entanglement entropy.
\end{abstract}

\section{Introduction}

Recent work \cite{aaronson2024pseudoent} introduced the notion of \textit{pseudoentangled} quantum states, in analogy with pseudorandomness in classical computation. An ensemble of quantum states is said to be pseudoentangled if states in the ensemble have low entanglement across every bipartition, but they are difficult to distinguish from states with much larger entanglement. A more formal definition is the following:

\begin{definition}
    A pseudoentangled ensemble with gap $f(n)$ vs.\ $g(n)$ (where $f(n) > g(n)$) consists of two ensembles of $n$-qubit states $\{\ket{\Psi_k}, \ket{\Phi_k}\}_k$ indexed by a key $k \in \{0,1\}^{\poly(n)}$ such that
    \begin{itemize}
        \item $\ket{\Psi_k}$ and $\ket{\Phi_k}$ are preparable in quantum polynomial time.
        \item With probability at least $1 - \poly(n)$ over the choice of $k$, the entanglement entropy across every cut of size $\Omega(f(n))$ of $\ket{\Psi_k}$ (respectively, $\ket{\Phi_k}$) is $\Theta(f(n))$ (respectively, $\Theta(g(n))$).
        \item For any polynomial $p(n)$, no polynomial-time quantum adversary can distinguish 
        \[
        \rho \coloneqq \E_k \left[ \ketbra{\Psi_k}{\Psi_k}^{\otimes p(n)}\right] \qquad \text{and} \qquad \sigma = \E_k \left[ \ketbra{\Phi_k}{\Phi_k}^{\otimes p(n)}\right]
        \] with better than negligible success probability.
    \end{itemize}
\end{definition}

\cite{aaronson2024pseudoent} showed that pseudoentangled states can be instantiated in polynomial time and logarithmic depth, assuming the existence of quantum-secure one-way functions. So, under a standard cryptographic assumption, there exists an efficient construction of pseudoentanglement.
Nevertheless, in applications, we sometimes need constructions that are even simpler and more efficient, due to constraints beyond total gate complexity. 
For example, \cite{aaronson2024pseudoent} suggested that a construction of ``holographic'' pseudoentangled states might imply that the AdS/CFT dictionary is hard to compute. However, it remains open to build such pseudoentangled states that are compatible with the laws of AdS/CFT.  
There are also various other measures of quantum state complexity, beyond circuit depth. 
For example, a counting argument shows that stabilizer states and free-fermionic states can require super-logarithmic depth, but these states are also ``easy'' in the sense that their evolutions are efficiently classically simulable \cite{aaronson2004simulation,valiant2002quantum} and they are efficiently learnable \cite{aaronson43identifying,montanaro-bell-sampling,aaronson2023efficient}. 

In this work, we study the relationship between pseudoentanglement and non-Clifford complexity.
The Clifford group is a remarkably useful object in quantum information that consists of all quantum circuits generated by Hadamard, Phase, and CNOT gates.
Clifford gates are almost universal for quantum computing: the addition of any single-qubit non-Clifford gate gives rise to a universal gate set, as shown by Shi \cite{shi2002toffoli}. Generally speaking, Clifford gates are ``cheaper'' than non-Clifford gates, in a sense that can be formalized in a variety of applications.
Examples where the cost of a quantum operation is dominated by non-Clifford gates include quantum fault tolerance based on magic state distillation \cite{bravyi2005magicstates}, near-Clifford classical simulation algorithms \cite{aaronson2004simulation,BrayviPhysRevLett.116.250501, RallPhysRevA.99.062337, Bravyi2019simulationofquantum}, and quantum learning algorithms based on the stabilizer formalism \cite{lai2022learning,grewal2023efficient,grewal2023efficient2,grewal2023improved,grewal_et_al:LIPIcs.ITCS.2023.64,leone-stabilizer-nullity,leone2023learning,hangleiter2023bell,Chia2023}. 

A related work by Grewal, Iyer, Kretschmer, and Liang \cite{grewal_et_al:LIPIcs.ITCS.2023.64,grewal2023improved} investigated the stabilizer complexity of a different cryptographic object called \textit{pseudorandom} quantum states. 
These are ensembles of quantum states that cannot be distinguished from Haar-random by any polynomial-time adversary. The main result of \cite{grewal2023improved}, improving upon earlier work by the same authors, shows that $n$-qubit pseudorandom states require at least $\Omega(n)$ non-Clifford gates.
The present work asks whether a similar lower bound on non-Clifford resources holds for quantum pseudoentanglement.

As it happens, several of the known pseudoentangled state ensembles are \textit{also} pseudorandom states ensembles, including the only known instantiations of pseudoentanglement that achieve an optimal gap of $\Theta(n)$ vs.\ $\omega(\log n)$ \cite{aaronson2024pseudoent, giurgicatiron2023pseudorandomness}.
However, \cite{aaronson2024pseudoent} further observed that a pseudoentangled ensemble need not be pseudorandom, nor vice-versa. Hence, it is not clear that the computational resources needed to construct pseudoentangled states mirror those for pseudorandom states.

On the other hand, existing work has made clear that that \textit{some} amount of non-Cliffordness is needed to generate pseudoentangled states. Fattal, Cubitt, Yamamoto, Bravyi, and Chuang \cite{fattal2004entanglement} gave an efficient algorithm for computing the entanglement entropy across any bipartition of a \textit{stabilizer state} (i.e., a state preparable using Clifford gates only).
Combined with Montanaro's algorithm for learning an unknown stabilizer state \cite{montanaro-bell-sampling}, this implies that any ensemble $\{\ket{\Psi_k},\ket{\Phi_k}\}_k$ of stabilizer states cannot be pseudoentangled.

Our main result is an algorithm with a much stronger guarantee than the combination of \cite{fattal2004entanglement,montanaro-bell-sampling}. In short, our algorithm estimates the entanglement entropy of an unknown quantum state across any cut of qubits, where the accuracy of the estimate scales with the number of Pauli operators that stabilize the state (i.e., the number of Pauli operators for which the state is a $+1$-eigenvector).

\begin{theorem}[Informal version of \Cref{cor:main-without-d}]\label{thm:informal-algo}
    There is a polynomial-time quantum algorithm that takes as input
    \begin{enumerate}
        \item $O(n^3)$ copies of an $n$-qubit quantum state $\ket{\psi}$ that is 
        stabilized by at least $2^{n-t}$ Pauli operators, and
        \item A bipartition $A \sqcup B = [n]$ of qubits,
    \end{enumerate}
    and outputs an estimate of the entanglement entropy of $\ket{\psi}$ across the partition. The estimate is within an additive factor of $\frac{t}{2}$ bits of the true entanglement entropy, with high probability.
\end{theorem}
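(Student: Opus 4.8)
The plan is to exploit the structure that a state stabilized by many Pauli operators is ``close to stabilizer'' in a way that makes its entanglement entropy computable. Let $S$ denote the group of Pauli operators (up to phase) that stabilize $\ket{\psi}$; by hypothesis $|S| \geq 2^{n-t}$, so $S$ has rank at most $n$ and corank at most $t$ as a subgroup of the $n$-qubit Pauli group modulo phases. The first step is to \emph{learn the stabilizer group $S$}: using Bell difference sampling (as in \cite{montanaro-bell-sampling,grewal2023improved}) on $O(n^3)$ copies of $\ket{\psi}$, I can with high probability recover a generating set for $S$, since each Bell difference sample lies in (a coset structure related to) $S$ with constant probability and $O(n^3)$ samples suffice to span a subgroup of rank $\leq n$ and to verify membership. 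This reduces the problem to a purely algebraic one: given $S$ and a cut $A \sqcup B$, estimate the entanglement entropy of the unknown $+1$-eigenstate of $S$.

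The second step is the structural heart of the argument. For a genuine stabilizer state, \cite{fattal2004entanglement} shows the entanglement entropy across $A$ equals $|A| - \log_2|S_A|$ (up to normalization conventions), where $S_A$ is the subgroup of $S$ supported entirely on $A$; equivalently it is governed by how the symplectic form restricted to $A$-coordinates degenerates on $S$. When $S$ has corank $t$, $\ket{\psi}$ need not be a stabilizer state, but I claim its reduced density matrix $\rho_A = \Tr_B \ketbra{\psi}{\psi}$ is still block-diagonal in a stabilizer-coset basis determined by $S_A := \{P \in S : \supp(P) \subseteq A\}$: projecting onto the joint eigenspaces of $S_A$ decomposes $\rho_A$, and on each block the state looks like a stabilizer state perturbed in at most $t$ ``directions.'' Quantitatively, I want to show $|H(\rho_A) - (|A| - \log_2|S_A|)| \leq t/2$, and more usefully that $|A| - \log_2|S_A|$ can be computed classically in polynomial time from the generators of $S$ by linear algebra over $\F_2$ — restrict each generator to its $A$-support, row-reduce the resulting symplectic matrix, and read off the rank. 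This gives the estimate, and the $t/2$ slack comes from bounding the entropy contribution of the (at most) $t$ unconstrained symplectic pairs: each such pair can shift the Schmidt rank of $\rho_A$ by at most one bit, but because entropy is concave and the ``missing'' stabilizers come in conjugate pairs split across the cut, the worst-case deviation is halved.

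The third step is to assemble these: run the learning procedure, compute $|A| - \log_2 |S_A|$ classically, and output it (or $|A| - \log_2|S_A| - t/4$, recentering so the error is two-sided $\pm t/2$). The correctness probability is controlled by a union bound over the learning step's failure probability, which is exponentially small in the number of repetitions, absorbed into the $O(n^3)$ sample count.

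The main obstacle I anticipate is the second step — proving that the deviation between the true entanglement entropy and the ``stabilizer formula'' $|A| - \log_2|S_A|$ is at most $t/2$, rather than something like $t$. A clean argument needs a canonical form for states stabilized by a corank-$t$ Pauli subgroup: one should be able to write $\ket{\psi}$, after a Clifford change of basis, as a stabilizer state on $n - O(t)$ qubits tensored with an arbitrary state on $O(t)$ qubits (or a coherent superposition thereof indexed by a syndrome), so that tracing out $B$ affects only the $O(t)$ ``junk'' qubits plus the deterministic stabilizer part. Establishing that such a normal form exists, and tracking exactly how the $t$ corank splits between $A$ and $B$ to get the factor $t/2$ rather than $t$, is where the real work lies; the subadditivity/concavity bound $H(\rho_A) \leq H(\rho_A^{\mathrm{stab}}) + (\text{junk on } A)$ together with its reverse must both lose at most $t/2$, which forces a careful accounting of how a symplectic pair straddling the cut contributes to each side.
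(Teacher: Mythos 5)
Your first step — learning $\weyl(\ket\psi)$ via Bell difference sampling with $O(n^3)$ copies — matches the paper's approach, and your proposed estimate $\abs{A}-\log_2\abs{S_A}$ is precisely the paper's \emph{upper} bound on $\entropy(\rho_A)$ (\cref{lem:upper}): a local Clifford on $A$ maps $\weyl(\ket\psi)_A$ to $Z$-strings on a subset $A'\subseteq A$, so those qubits decouple.

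The genuine gap is in your second step. You want to show that this single quantity is within $t/2$ of the truth, but that is false: the paper only establishes that $\abs{A}-\dim(\weyl(\ket\psi)_A)$ is within $t$ of $\entropy(\rho_A)$, and only because it separately proves a \emph{lower} bound
\[
\entropy(\rho_A)\;\ge\;\dim\bigl(\weyl(\ket\psi)\bigr)-\dim\bigl(\weyl(\ket\psi)_B\bigr)-\abs{A}
\]
(\cref{lem:lower}) and observes that the two bounds differ by at most $n-\dim(\weyl(\ket\psi))\le t$. The $t/2$ accuracy comes from outputting the midpoint of the interval, not from a sharper one-sided bound. Your heuristic appeal to concavity and ``conjugate pairs split across the cut'' is not a substitute for this: there is no reason the upper bound $\abs{A}-\dim(S_A)$ should be within $t/2$ of the truth, and indeed the paper never claims it is. The real content of the lower bound is an explicit entanglement-distillation argument: extend a basis of $\weyl(\ket\psi)_B$ to a basis of $\weyl(\ket\psi)$, restrict the extension vectors to their $A$-coordinates, find a symplectic subspace inside that restriction via \cref{lemma:symplectic-subspace-inside}, and then use local Cliffords on $A$ and $B$ to turn the corresponding symplectic pairs into EPR pairs straddling the cut. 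Counting these pairs gives the lower bound. Without this, or something equivalent, you cannot improve the one-sided error below $t$, and your ``recenter by $t/4$'' suggestion does not fix the accounting (recentering an interval of width $t$ gives $\pm t/2$ only if you shift by $t/2$, which requires knowing $u-\ell$, i.e., requires the lower bound). So the missing idea is \cref{lem:lower}, and in particular the observation that locally anticommuting pairs inside $\weyl(\ket\psi)$ certify distillable EPR pairs.

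A smaller point: your ``block-diagonal in a stabilizer-coset basis'' picture and the proposed normal form (stabilizer on $n-O(t)$ qubits tensored with junk on $O(t)$ qubits) are not quite what the paper uses and aren't established in your sketch; the paper works directly with the isotropic subspace $\weyl(\ket\psi)\subseteq\F_2^{2n}$ and its restrictions $\weyl(\ket\psi)_A$, $\weyl(\ket\psi)_B$, never needing a global normal form for $\ket\psi$.
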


We remark that the set of states for which this is applicable is rather large and includes states prepared by a Clifford circuit with up to $\lfloor \frac{t}{2} \rfloor$ auxiliary single-qubit non-Clifford gates (such as the $T$-gate).
When the algorithm is run on a stabilizer state (i.e., when $t=0$), then our algorithm outputs the entanglement entropy across any cut of qubits \emph{exactly}, recovering the aforementioned result of  \cite{fattal2004entanglement,montanaro-bell-sampling}.
On the other hand, the bounds degrade as $t$ becomes too close to $n$.

As a straightforward consequence, we show that pseudoentangled state ensembles require a number of non-Clifford gates that scales linearly in the pseudoentanglement gap.

\begin{theorem}[Restatement of \cref{cor:pes-lowerbound}]\label{thm:informal-pes}
    Any family of Clifford circuits that produces a pseudoentangled ensemble $\{\ket{\Psi_k}, \ket{\Phi_k}\}_k$ with entropy gap $f(n) \text{ vs. } g(n)$ satisfying $f(n) - g(n) \ge t$ must use $\Omega(t)$ auxiliary non-Clifford single-qubit gates.
\end{theorem}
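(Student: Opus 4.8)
The plan is to derive this as a contrapositive consequence of the entanglement-entropy estimation algorithm of \Cref{thm:informal-algo}. Suppose, for contradiction, that some family of Clifford circuits with at most $\lfloor (t-1)/2 \rfloor$ auxiliary non-Clifford single-qubit gates produces a pseudoentangled ensemble $\{\ket{\Psi_k}, \ket{\Phi_k}\}_k$ with gap $f(n)$ vs.\ $g(n)$ where $f(n) - g(n) \ge t$. The first step is to recall (as noted in the remark following \Cref{thm:informal-algo}, and which I would state as an explicit lemma) that a state prepared by a Clifford circuit with $m$ auxiliary single-qubit non-Clifford gates is stabilized by at least $2^{n-2m}$ Pauli operators. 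This is the standard fact that each non-Clifford gate can shrink the stabilizer group by at most a factor of $4$; I would cite the relevant lemma from the body (or prove it quickly via the observation that conjugating the stabilizer group by a single-qubit gate and intersecting with the Pauli group loses at most two generators). Hence both $\ket{\Psi_k}$ and $\ket{\Phi_k}$ are stabilized by at least $2^{n - (t-1)}$ Pauli operators.

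The second step is to run the algorithm of \Cref{thm:informal-algo} (with the parameter ``$t$'' there set to $t-1$) on $\poly(n)$ copies of the unknown state, for a bipartition $A \sqcup B$ of size $\Omega(f(n))$. By the guarantee of the ensemble, with probability $1 - \poly(n)$ over $k$ the true entanglement entropy across this cut is $\Theta(f(n))$ for $\ket{\Psi_k}$ and $\Theta(g(n))$ for $\ket{\Phi_k}$; more importantly, for the \emph{distinguishing} task I only need that these two quantities differ by at least $t$. The algorithm outputs an estimate within $\frac{t-1}{2}$ of the truth, so the estimates for $\ket{\Psi_k}$ and $\ket{\Phi_k}$ lie in disjoint intervals: thresholding the estimate at the midpoint correctly identifies which ensemble the state came from with high probability. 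This yields a polynomial-time quantum distinguisher acting on $\poly(n)$ copies, contradicting the third bullet of the definition of a pseudoentangled ensemble. Taking the contrapositive and absorbing the ``$-1$'' into the $\Omega(\cdot)$ gives the claimed $\Omega(t)$ lower bound.

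One subtlety I would be careful about is matching the exact quantifiers in the definition of pseudoentanglement: the definition only promises the $\Theta(f(n))$ vs.\ $\Theta(g(n))$ behavior on cuts of size $\Omega(f(n))$ and only with probability $1 - \poly(n)$ (presumably $1 - 1/\poly(n)$) over $k$, so the distinguisher must fix such a cut in advance and its advantage is the product of this probability with the algorithm's success probability — still non-negligible, which is all that is needed. A second point is that \Cref{thm:informal-algo} as stated consumes $O(n^3)$ copies, whereas the definition's indistinguishability is against adversaries given $p(n)$ copies for an arbitrary polynomial $p$; choosing $p(n) = O(n^3)$ suffices. The main (very mild) obstacle is simply being precise about the copy-complexity bookkeeping and the Pauli-stabilizer counting lemma for non-Clifford gates; the reduction itself is immediate once \Cref{thm:informal-algo} is in hand.
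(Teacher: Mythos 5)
Your proposal follows essentially the same route as the paper's proof of \cref{cor:pes-lowerbound}: invoke \cref{lem:stab_dim} to convert a bound on the number of auxiliary non-Clifford gates into a lower bound on stabilizer dimension, run the entanglement-estimation algorithm (\cref{alg:main} via \cref{cor:main-without-d}) on a fixed bipartition, and use the resulting accuracy guarantee to build a polynomial-time distinguisher that contradicts the indistinguishability requirement. The only cosmetic differences are the framing (you argue by contradiction and threshold the midpoint estimate, whereas the paper directly tests whether $f(n)$ falls in the returned interval $[\ell,u]$ with $u-\ell \le 2t'$) — both are correct, and your observations about fixing the cut in advance, taking $p(n)=O(n^3)$, and the presumed $1-1/\poly(n)$ typo in the definition are all accurate.
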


So, pseudoentangled ensembles with the optimal $\Theta(n)$ vs.\ $\omega(\log n)$ gap require a linear number of non-Clifford gates. Interestingly, this matches the lower bound on non-Clifford gates needed for pseudorandom states \cite{grewal2023improved}.

\cref{cor:pes-lowerbound} is optimal up to polylogarithmic factors under plausible computational assumptions.
In particular, Ma \cite{mapersonal} constructs pseudoentangled state ensembles in $O(n \  \polylog(n))$ time under the assumption that linear-time quantum-secure pseudorandom functions exist.\footnote{The details will be included in a forthcoming work due to Ma \cite{mapersonal}.} 
This is to say that under this assumption, pseudoentangled state ensembles on $n$ qubits require at most $O(n \  \polylog(n))$ non-Clifford gates in total.
It is widely conjectured that linear-time \emph{classically}-secure pseudorandom functions exist \cite{ishai_10.1145/1374376.1374438, fan_10.1145/3519935.3520010}, and it is plausible that these (or other) constructions are also secure against quantum adversaries. 

Prior to this work, it was unknown if even $O(1)$ non-Clifford gates sufficed to construct pseudoentangled states. 
While one can efficiently learn states prepared by $O(\log n)$ non-Clifford gates \cite{grewal2023efficient,grewal2023efficient2}, it is unclear how to leverage those algorithms to estimate entanglement entropy. %
Following intuition from simulation algorithms for near-Clifford circuits, whose running times scale exponentially in the number of non-Clifford gates,
it was also conceivable that a super-logarithmic number of non-Clifford gates would be sufficient to construct pseudoentangled states.

\paragraph{Concurrent Work}
While finalizing this work, we became aware of independent and concurrent work by Gu, Oliveiro, and Leone \cite{gu2024magicinduced}.
Their \cite[Lemma 7]{gu2024magicinduced} resembles our \cref{thm:main-entanglement-bounds}. 
However, we only prove bounds for the von Neumann entropy, whereas \cite{gu2024magicinduced} prove similar bounds for any $\alpha$-R\'enyi entanglement entropy, which captures the von Neumann entropy as a special case. 

\subsection{Main Ideas}
For an $n$-qubit quantum state $\ket\psi$, let $\weyl(\ket\psi)$ denote the Pauli operators $P$ for which $P \ket\psi = 1$.\footnote{See \cref{def:unsigned-stabilizer-group} for a formal definition. We find it convenient to work with Weyl operators, a subset of Pauli operators that form a basis of $\C^{2^n \times 2^n}$. One can think of $\weyl(\ket\psi)$ as the stabilizer group of $\ket\psi$ with all of the phase information removed. For example, for any computational basis state $\ket{x}$, $\weyl(\ket x) \equiv \{I,Z\}^{\otimes n}$.} Let $A \sqcup B = [n]$ be a bipartition of qubits.
Denote $\weyl(\ket\psi)_A$ as the subset of Pauli operators in $\weyl(\ket\psi)$ that act only the qubits indexed by $A$, and define $\weyl(\ket\psi)_B$ analogously. 
We use $\dim(G)$ to refer to the minimum number of generators of a group $G$. 
It is easy to verify that $\weyl(\ket\psi)$,
$\weyl(\ket\psi)_A$, and $\weyl(\ket\psi)_B$ are abelian subgroups of the Pauli group.
We prove the following bounds on the entanglement entropy across $(A, B)$, which hold for any quantum state.

\begin{theorem}[Restatement of \cref{thm:main-entanglement-bounds}]\label{thm:informal-ent-bounds}
    Let $\rho = \ket{\psi}\!\!\bra{\psi}$ be an $n$-qubit quantum state and let $A \sqcup B = [n]$ be a partition of qubits. Then
    \[
        \dim\left(\weyl(\ket \psi)\right) - \dim\left(\weyl(\ket \psi)_B \right) - \abs{A}
        \le \entropy(\rho_A) \le \abs{A} - \dim\left(\weyl(\ket \psi)_A\right), 
    \]
    where $\rho_A = \tr_B(\rho)$ is the reduced density matrix of $\rho$ after tracing out the qubits in $B$ and $\entropy(\rho_A)$ is the entanglement entropy across $(A, B)$.
\end{theorem}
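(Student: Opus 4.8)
The plan is to analyze the reduced density matrix $\rho_A$ through its own unsigned stabilizer structure, i.e.\ the group of Pauli operators on $A$ that have $\pm 1$ eigenvalue with $\rho_A$ supported in the appropriate eigenspace, and then relate entropy to the size of this group. The key observation is that for \emph{any} Hermitian $\rho_A$ acting on $|A|$ qubits, expanding in the Pauli (Weyl) basis $\rho_A = 2^{-|A|}\sum_{P} c_P\, P$ with real coefficients $c_P$, the entropy $\entropy(\rho_A) = -\tr(\rho_A \log \rho_A)$ is controlled by how many Pauli operators appear with nonzero coefficient. Concretely, I would first establish the \emph{upper bound}: if $\weyl(\ket\psi)_A$ has dimension $d_A$ (so $|\weyl(\ket\psi)_A| = 2^{d_A}$), then $\rho_A$ commutes with every element of a stabilizer group of rank $d_A$, hence $\rho_A$ is block-diagonal with each block supported on a subspace of dimension $2^{|A| - d_A}$; since entropy is maximized by the maximally mixed state on that support, $\entropy(\rho_A) \le |A| - d_A = |A| - \dim(\weyl(\ket\psi)_A)$. (One must check that the relevant Paulis in $\weyl(\ket\psi)_A$ genuinely stabilize $\rho_A$ — this follows because $P\ket\psi = \ket\psi$ for $P$ supported on $A$ implies $P\rho_A = \rho_A P = \rho_A$ after tracing out $B$.)

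For the \emph{lower bound} I would exploit purity: since $\ket\psi$ is pure, $\entropy(\rho_A) = \entropy(\rho_B)$, so by the upper bound applied to $B$ we get $\entropy(\rho_A) = \entropy(\rho_B) \le |B| - \dim(\weyl(\ket\psi)_B)$ — but that's the wrong direction, so instead the real content is a counting/dimension argument connecting $\dim(\weyl(\ket\psi))$ to the local groups. The natural tool is a Pauli-basis dimension count: the full group $\weyl(\ket\psi)$ embeds into $\weyl(\ket\psi)_A \times \weyl(\ket\psi)_B$ only partially; elements of $\weyl(\ket\psi)$ that are \emph{not} products of local stabilizers still constrain the Schmidt structure. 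I would argue that each generator of $\weyl(\ket\psi)$ beyond those generating $\weyl(\ket\psi)_A$ and $\weyl(\ket\psi)_B$ forces a correlation between $A$ and $B$ that increases the Schmidt rank, and more precisely that the Schmidt rank of $\ket\psi$ is at least $2^{\dim(\weyl(\ket\psi)) - \dim(\weyl(\ket\psi)_A) - \dim(\weyl(\ket\psi)_B)}$ — wait, I need this against $|A|$, so the cleaner route is: $\entropy(\rho_A) \ge \log(\text{something})$. Here I would use that $\rho_A$ has at least $2^{\dim(\weyl(\ket\psi)) - \dim(\weyl(\ket\psi)_B)}$ nonzero Pauli coefficients $c_P$ with $P$ supported on $A$ (these come from projecting the global stabilizers down), each of magnitude $1$, and then invoke the inequality $\entropy(\rho_A) \ge |A| - \log(\#\{P : c_P \ne 0\})$, which is the standard bound $\tr(\rho_A^2) \le 2^{-\entropy_2(\rho_A)}$ combined with $\tr(\rho_A^2) = 2^{-|A|}\sum_P c_P^2 \ge 2^{-|A|} \cdot \#\{P: c_P \ne 0\}$ and $\entropy_2 \le \entropy$ (Rényi-2 lower bounds von Neumann). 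Rearranging gives $\entropy(\rho_A) \ge |A| - \log\#\{P : c_P \ne 0\}$; wait, that's still an upper-bound-looking term — the point is that $\#\{P : c_P \ne 0\}$ is \emph{at most} $2^{|A|}/2^{\dim(\weyl(\ket\psi)) - \dim(\weyl(\ket\psi)_B)}$ by a coset-counting argument on the stabilizer group, which yields exactly $\entropy(\rho_A) \ge \dim(\weyl(\ket\psi)) - \dim(\weyl(\ket\psi)_B) - |A|$.

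The main obstacle I anticipate is the lower bound's coset-counting step: showing that the number of Weyl operators $P$ on $A$ with $\tr(\rho_A P) \ne 0$ is at most $2^{|A| - \dim(\weyl(\ket\psi)) + \dim(\weyl(\ket\psi)_B)}$. The intuition is that a Weyl operator $Q_A \otimes Q_B$ in $\weyl(\ket\psi)$ contributes $Q_A$ to the support of $\rho_A$, but two global stabilizers with the same $A$-part differ by an element supported on $B$, i.e.\ by an element of $\weyl(\ket\psi)_B$ up to sign — so the map $\weyl(\ket\psi) \to \{\text{Weyl ops on } A\}$ sending $P \mapsto P|_A$ has kernel of size roughly $|\weyl(\ket\psi)_B|$, giving an image of size $|\weyl(\ket\psi)|/|\weyl(\ket\psi)_B| = 2^{\dim(\weyl(\ket\psi)) - \dim(\weyl(\ket\psi)_B)}$ nonzero coefficients; but I need the \emph{reverse} inequality bounding the \emph{total} number of nonzero coefficients, which requires showing $\rho_A$'s Pauli support is \emph{contained in} a coset structure of total size $2^{|A|}$ divided appropriately — this uses that $\rho_A$ is stabilized by $\weyl(\ket\psi)_A$, so its Pauli support lies in the centralizer-quotient, whose size is $2^{|A| - \dim(\weyl(\ket\psi)_A)}$, and then one must further intersect with the constraint from global stabilizers. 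Pinning down this double-counting cleanly — reconciling the $A$-only group, the $B$-only group, and the global group — is where the care is needed; I would set it up via the symplectic vector space $\F_2^{2n}$, writing $\weyl(\ket\psi)$ as an isotropic subspace $V$, with $V_A = V \cap (\F_2^{2|A|} \oplus 0)$ and $V_B$ analogous, and then the entropy bounds become pure linear algebra about $\dim V$, $\dim V_A$, $\dim V_B$, and the projection $\pi_A(V)$, with $\dim \pi_A(V) = \dim V - \dim V_B$.
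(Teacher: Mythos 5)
Your upper bound is correct and essentially equivalent to the paper's: you observe that every $P \in \weyl(\ket\psi)_A$ satisfies $P\rho_A = \pm\rho_A$, so $\rho_A$ is supported on a subspace of dimension $2^{|A| - \dim(\weyl(\ket\psi)_A)}$; the paper reaches the same conclusion by applying a local Clifford on $A$ that turns $\weyl(\ket\psi)_A$ into $Z$-strings, making the $\dim(\weyl(\ket\psi)_A)$-qubit computational-basis factor explicit. These are the same argument in two languages.

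Your lower bound has a genuine gap, and the concrete step you flagged as needing care does in fact fail. The coset-counting claim — that the number $N$ of Weyl operators $P$ on $A$ with $\tr(\rho_A P) \ne 0$ is at most $2^{|A|}/2^{\dim(\weyl(\ket\psi)) - \dim(\weyl(\ket\psi)_B)}$ — is false. For a single Bell pair with $A = \{1\}$, $B=\{2\}$, one has $\dim(\weyl(\ket\psi)) = 2$, $\dim(\weyl(\ket\psi)_B) = 0$, $|A|=1$, so your bound would force $N \le 2^{-1} < 1$, yet $N = 1$ (the identity). For $\ket{00}$ with the same cut, your bound gives $N \le 1$ but $\rho_A = (I+Z)/2$ has $N=2$. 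The underlying obstruction is that $\rho_A$'s Pauli support is not controlled by $\weyl(\ket\psi)$: many nonzero coefficients $c_P$ with $|c_P| < 1$ exist that have nothing to do with the stabilizer group, and the ``reverse inequality'' you need simply does not hold. (There is also a slip of direction in $\sum_P c_P^2 \ge \#\{P : c_P\ne 0\}$; since $|c_P|\le 1$ it should be $\le$, but that is minor compared to the failure of the $N$ bound.) The paper's lower bound takes a different route entirely: it observes that the projection of $\weyl(\ket\psi)$ onto the $A$-coordinates is a subspace of the symplectic space $(\F_2^{2n})_A$ of dimension $\dim(\weyl(\ket\psi)) - \dim(\weyl(\ket\psi)_B)$, and then invokes a purely symplectic-linear-algebra lemma (\cref{lemma:symplectic-subspace-inside}) to extract inside this projection a \emph{symplectic} subspace of dimension at least $2\left(\dim(\weyl(\ket\psi)) - \dim(\weyl(\ket\psi)_B) - |A|\right)$. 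Each symplectic pair corresponds to Pauli operators on $A$ that anticommute locally but whose $B$-extensions restore commutativity, and local Clifford circuits on $A$ and $B$ then distill that many EPR pairs. This ``locally anticommuting $\Rightarrow$ distillable EPR pair'' step is the key structural idea missing from your proposal; your final sentence gestures at the right objects ($\dim\pi_A(V) = \dim V - \dim V_B$) but does not connect the excess over $|A|$ to entanglement via the symplectic structure.
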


Crucially, the quantities in \cref{thm:informal-ent-bounds} can be (approximately) computed efficiently, given a polynomial number of copies of $\ket\psi$. 
First, we learn generators for $\weyl(\ket\psi)$ with Bell difference sampling, a quantum measurement that consumes four copies of a state and produces a Pauli operator (see the end of \cref{sec:symplectic-vector-spaces} for further detail).
Bell difference sampling many times, along with some classical post-processing, suffices to (approximately) learn generators of $\weyl(\ket\psi)$, as proven in prior work \cite{grewal2023improved,grewal2023efficient}.
Then, using these generators of $\weyl(\ket\psi)$, we can compute (approximations of) $\weyl(\ket\psi)_A$ and $\weyl(\ket\psi)_B$ in polynomial time by solving a system of linear constraints.
We note that these approximations suffice, due to a result of Audenaert \cite[Theorem 1]{audenaert2007sharp}, which relates the entanglement entropy of states that are close in trace distance.

Let us now explain how the upper and lower bounds in \cref{thm:informal-ent-bounds} are proved, and then explain some applications. 
At a high level, we argue that there exist Clifford circuits acting locally on either $A$ or $B$ that exhibit entanglement (or the lack thereof) in the system. Because the Clifford circuits are local, we conclude that the original state must have the same entanglement (or lack thereof). The formal proofs are given in \cref{sec:bounds}.

We begin with the upper bound.
Trivially, the entanglement entropy is at most $\abs{A}$. 
To simplify the presentation, define $a \coloneqq \dim(\weyl(\ket\psi)_A)$. 
By known techniques, one can construct a Clifford circuit acting only on $A$ that maps $\weyl(\ket\psi)_A$ to Pauli-$Z$ strings on a subset of $a$ qubits.\footnote{It is well known that Clifford unitaries map Pauli operators to Pauli operators, see \cref{sec:symplectic-vector-spaces} for more detail.}
This has the effect of mapping the state $\ket\psi$ to a \emph{product state} where a subset of $a$ qubits in $A$ are in a computational basis state and the remaining qubits are in some arbitrary state.   
The $a$ qubits cannot be entangled with the rest of the system, and our upper bound follows.

For the lower bound, we argue that there exist Clifford circuits acting locally on $A$ and $B$, respectively, that distill EPR pairs across qubits of $A$ and $B$.
Observe that the EPR state is stabilized by the Pauli operators generated by $X \otimes X$ and $Z \otimes Z$.
While these two Pauli operators commute with one another, they locally anticommute (i.e., $X$ and $Z$ do not commute).
Clifford circuits acting locally on $A$ (or $B$, respectively) do not affect the global or local commutation relations.
As such, any pair of locally anticommuting Pauli operators in $\weyl(\psi)$ can be mapped, via local Clifford circuits on $A$ and $B$, respectively, to $X \otimes X$ and $Z \otimes Z$, creating an EPR pair on one qubit of $A$ and $B$ each.
Our lower bound follows from counting the number of EPR pairs we can produce in this way.

For both the upper and lower bounds, the Clifford circuits can be found efficiently. 
Indeed, a similar approach played a crucial role in the tomography algorithms given in \cite{grewal2023efficient, grewal2023efficient2}. 
Additionally, one can view our lower bound as an efficient algorithm for entanglement distillation of quantum states with large stabilizer dimension (\cref{def:stabilizer-dimension}), which may be of independent interest.

We conclude by explaining how to apply \cref{thm:informal-ent-bounds} to get our entanglement estimation algorithm (\cref{thm:informal-algo}) and the pseudoentanglement lower bound (\cref{thm:informal-pes}). 
Our algorithm outputs upper and lower bounds $(u, \ell)$, essentially by computing the quantities appearing in \cref{thm:informal-ent-bounds}.
When run on a quantum state $\ket\psi$ that is stabilized by least $2^{n-k}$ Pauli operators, we prove that $u - \ell \leq k$. 
Therefore, $\frac{u + \ell}{2}$ will always be within an additive factor of $\frac{k}{2}$ bits of the true entanglement entropy.
As a corollary, we obtain a lower bound on preparing pseudoentangled states. %
Suppose we have two state ensembles $\{\ket{\Psi_k}\}_k$ and $\{\ket{\Phi_k}\}_k$ that are prepared with at most $t$ non-Clifford gates. 
If we run our entropy estimation algorithm on copies drawn from either ensemble, we will recover upper and lower bounds $(u, \ell)$ such that $u - \ell \leq 2t$.
Therefore, the pseudoentanglment gap between these ensembles is at most $2t$.

\section{Preliminaries}
For a positive integer $n$, $[n] \coloneqq \{1, 2, \dots, n\}$.
For $x = (a,b) \in \F_2^{2n}$, $a$ and $b$ always denote the first and last $n$ coordinates of $x$, respectively.
For vectors $v_1, \ldots, v_k$, $\langle v_1, \ldots, v_k \rangle$ denotes their span.
For matrix $X \in \mathbb{C}^{d \times d}$, $\norm{X}_1$ denotes the sum of the absolute values of its singular values (known as the trace norm, nuclear norm, or Schatten 1-norm). 
For quantum mixed states $\rho, \sigma$, $\mathrm{dist}_{\rm{tr}} \coloneqq \frac{1}{2}\norm{\rho - \sigma}_1$ is the trace distance.
For us, $\log$ denotes the logarithm with base $2$, and $\ln$ is the logarithm with base $e \approx 2.718$.

Let $A \sqcup B$ be a partition of $[n]$.
We refer to $(A, B)$ as a \emph{cut} of $n$ qubits.
Let $\rho$ be an $n$-qubit quantum state.
The entanglement entropy across $(A,B)$ is defined as 
\[
\entropy(\rho_A) \coloneqq - \tr(\rho_A \log \rho_A) = -\tr(\rho_B \log \rho_B), 
\]
where $\rho_A = \tr_B(\rho)$ and $\rho_B = \tr_A(\rho)$ are the states obtained by tracing out $B$ and $A$, respectively.

Define the binary entropy function $\binentropy(p)$ by
\[
\binentropy(p) = - p \log(p) - (1-p) \log(1-p).
\]
The following is a well-known upper bound on $\binentropy(p)$.
\begin{fact}\label{fact:binary-entropy-bound}
    \[
        \binentropy(p) \leq \left(4p(1-p)\right)^{1/\ln 4} \leq e \cdot p^{1/\ln 4} \leq e \cdot p^{0.72}.
    \]
\end{fact}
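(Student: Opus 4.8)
The plan is to handle the three inequalities in turn; the last two are immediate, so the real content is the first. For $e\cdot p^{1/\ln 4} \le e\cdot p^{0.72}$: since $1/\ln 4 = 0.7213\ldots > 0.72$ and $p\in[0,1]$, the map $a\mapsto p^{a}$ is nonincreasing, so $p^{1/\ln 4}\le p^{0.72}$. For $(4p(1-p))^{1/\ln 4}\le e\cdot p^{1/\ln 4}$: both sides vanish at $p=0$, and for $p>0$, dividing through by $p^{1/\ln 4}$ and using that $x\mapsto x^{1/\ln 4}$ is increasing turns the claim into $4(1-p)\le 4 = e^{\ln 4}$, i.e.\ $p\ge 0$.

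For the first inequality $\binentropy(p)\le(4p(1-p))^{1/\ln 4}$, I would symmetrize by setting $s:=1-2p$, so that $4p(1-p)=1-s^{2}$; by the symmetry $\binentropy(p)=\binentropy(1-p)$ it suffices to prove $\binentropy\!\bigl(\tfrac{1-s}{2}\bigr)\le(1-s^{2})^{1/\ln 4}$ for $s\in[0,1]$. The endpoint $s=1$ is trivial, and for $s\in[0,1)$ both sides are positive, so after taking logarithms it is enough to show that $\phi(s):=\ln\binentropy\!\bigl(\tfrac{1-s}{2}\bigr)-\tfrac{1}{\ln 4}\ln(1-s^{2})$ satisfies $\phi(s)\le 0$. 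Since $\phi$ is smooth on $[0,1)$ and $\phi(0)=0$, it suffices to verify $\phi'(s)\le 0$ on $(0,1)$. Using $\tfrac{d}{ds}\binentropy\!\bigl(\tfrac{1-s}{2}\bigr)=-\tfrac12\log\tfrac{1+s}{1-s}$, clearing the (positive) denominators, and using $\ln 4=2\ln 2$, the inequality $\phi'(s)\le 0$ becomes
\[
2s\,\binentropy\!\bigl(\tfrac{1-s}{2}\bigr)\ \le\ (1-s^{2})\,\ln\tfrac{1+s}{1-s}.
\]

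To finish, I would expand both sides as power series in $s$. From $\binentropy\!\bigl(\tfrac{1-s}{2}\bigr)=1-\tfrac{1}{\ln 2}\sum_{m\ge1}\tfrac{s^{2m}}{2m(2m-1)}$ and $\ln\tfrac{1+s}{1-s}=2\sum_{k\ge0}\tfrac{s^{2k+1}}{2k+1}$, and cancelling the common leading term $2s$, the inequality reduces to $\sum_{k\ge1}c_{k}\,s^{2k+1}\ge 0$ for $s\in[0,1]$, where $c_{k}:=\tfrac{1}{(\ln 2)k(2k-1)}-\tfrac{4}{(2k+1)(2k-1)}$. This does \emph{not} hold coefficient by coefficient: $c_{1}>0$ but $c_{k}<0$ for all $k\ge2$ (each sign comparison boils down to $2k+1$ versus $4k\ln 2$). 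The saving grace is that $\sum_{k\ge1}c_{k}=0$, since $\sum_{k\ge1}\tfrac{1}{k(2k-1)}=2\ln 2$ and $\sum_{k\ge1}\tfrac{4}{(2k+1)(2k-1)}=2$ both telescope. Hence the partial sums $C_{K}:=\sum_{k=1}^{K}c_{k}$ decrease from $c_{1}>0$ down to $0$, so $C_{K}\ge 0$ for every $K$, and summation by parts gives $\sum_{k\ge1}c_{k}s^{2k+1}=(1-s^{2})\sum_{K\ge1}C_{K}\,s^{2K+1}\ge 0$.

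I expect the last series inequality to be the main obstacle. The bound $\binentropy(p)\le(4p(1-p))^{1/\ln 4}$ is an equality at $p=\tfrac12$ and is extremely tight over a wide range of $p$, so lossy estimates are unavailable: term-by-term comparison fails for $k\ge2$, and the naive $e^{-x}\ge 1-x$ applied to $(1-s^{2})^{1/\ln 4}=\exp\!\bigl(\tfrac{1}{\ln 4}\ln(1-s^{2})\bigr)$ also fails. The identity $\sum_{k}c_{k}=0$ — that the surplus in the leading coefficient exactly cancels the total deficit of the remaining ones — is the precise structural fact that powers the summation-by-parts step, so getting that accounting right is the crux.
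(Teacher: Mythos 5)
The paper does not actually prove this statement; it simply cites it as a well-known fact (the first inequality $\binentropy(p)\le(4p(1-p))^{1/\ln 4}$ is often attributed to Topsøe). Your argument is correct and self-contained. The last two inequalities are handled as you say. For the first, the reduction via the logarithmic derivative to $2s\,\binentropy\!\bigl(\tfrac{1-s}{2}\bigr)\le(1-s^2)\ln\tfrac{1+s}{1-s}$ on $(0,1)$ is right, as are the power-series coefficients $c_k=\tfrac{1}{(\ln 2)k(2k-1)}-\tfrac{4}{(2k+1)(2k-1)}$, the sign pattern ($c_1>0$, $c_k<0$ for $k\ge 2$), and the telescoping evaluations $\sum_k\tfrac{1}{k(2k-1)}=2\ln 2$ and $\sum_k\tfrac{4}{(2k+1)(2k-1)}=2$ giving $\sum_k c_k=0$. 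The Abel-summation step is also valid: the partial sums $C_K$ decrease monotonically from $c_1>0$ to $0$, hence $C_K\ge 0$ for all $K$, the boundary term $C_N s^{2N+1}$ vanishes as $N\to\infty$ for $s\in[0,1)$, and the identity $\sum_{k\ge1}c_k s^{2k+1}=(1-s^2)\sum_{K\ge1}C_K s^{2K+1}\ge 0$ follows. The endpoint $s=1$ is handled directly, and $\phi(0)=0$ together with $\phi'\le 0$ closes the argument.
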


If two states are close in trace distance, then so is their entanglement entropy.
\begin{lemma}[Fannes-Audenaert inequality {\cite[Theorem 1]{audenaert2007sharp}}]
\label{lem:fannes}
    Let $\rho = \ketbra{\psi}{\psi}$ and $\sigma = \ketbra{\phi}{\phi}$ be $n$-qubit states satisfying $\dist_\tr(\ket\psi, \ket\phi) \le \eps$, and let $A \sqcup B = [n]$ be a partition. Then
    
    \[
    |\entropy(\rho_A) - \entropy(\sigma_A)| \le \eps n + \binentropy(\eps)
    \]
\end{lemma}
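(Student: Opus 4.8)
The statement to prove is the Fannes--Audenaert inequality (\cref{lem:fannes}), which says that states close in trace distance have close entanglement entropy. Let me sketch how I would prove it, modulo the cited sharp bound of Audenaert.

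\textbf{Plan.} The key observation is that trace distance on pure states controls trace distance of the reduced density matrices. First I would show that $\dist_\tr(\rho_A, \sigma_A) \le \dist_\tr(\rho, \sigma) \le \eps$, which follows from monotonicity of trace distance under the partial-trace channel $\tr_B$ (any quantum channel is a contraction in trace norm). So the problem reduces to a statement purely about two density matrices $\rho_A, \sigma_A$ on the $\abs{A}$-qubit system that are $\eps$-close in trace distance: namely $\abs{\entropy(\rho_A) - \entropy(\sigma_A)} \le \eps \abs{A} + \binentropy(\eps)$. (Note this is even slightly stronger than the claimed bound, since $\abs{A} \le n$; I'd just use $\abs{A}\le n$ at the end, or state the sharper version.)

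\textbf{Main step.} Now I would invoke the continuity bound for von Neumann entropy on a $d$-dimensional system: if $\norm{\rho - \sigma}_1 \le 2\eps$ (i.e.\ trace distance $\le \eps$) with $\eps \le 1$, then $\abs{\entropy(\rho) - \entropy(\sigma)} \le \eps \log(d-1) + \binentropy(\eps)$. This is precisely Audenaert's sharp Fannes-type inequality \cite[Theorem 1]{audenaert2007sharp}, the result being cited. Applying it with $d = 2^{\abs{A}}$ gives $\abs{\entropy(\rho_A) - \entropy(\sigma_A)} \le \eps \log(2^{\abs{A}} - 1) + \binentropy(\eps) \le \eps \abs{A} + \binentropy(\eps) \le \eps n + \binentropy(\eps)$, as desired. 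If one wanted a self-contained argument rather than a black-box citation, the standard approach is: diagonalize and use the fact that the eigenvalue vectors of $\rho_A$ and $\sigma_A$, suitably ordered, are close in $\ell_1$; then bound $\abs{\entropy(p) - \entropy(q)}$ for probability vectors $p, q$ with $\norm{p-q}_1 \le 2\eps$ by a coupling/majorization argument — the extremal case is when the mass difference is concentrated to move weight between a single coordinate and the uniform distribution on the rest, which yields exactly the $\eps \log(d-1) + \binentropy(\eps)$ form. But since Audenaert's theorem is quoted verbatim in the excerpt, I would simply cite it.

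\textbf{Expected obstacle.} There is essentially no obstacle here: the entire content is (i) partial trace contracts trace distance, and (ii) the cited sharp continuity bound. The only minor care needed is bookkeeping the factor-of-two convention between trace distance $\frac{1}{2}\norm{\cdot}_1$ and trace norm, and checking the dimension bound $\log(2^{\abs{A}}-1) \le \abs{A} \le n$. I would also note in passing that the hypothesis $\dist_\tr(\ket\psi,\ket\phi)\le\eps$ is stated for pure states but only the mixed-state bound on $\rho_A,\sigma_A$ is used, so the purity of $\rho$ and $\sigma$ plays no role beyond ensuring $\entropy(\rho_A)=\entropy(\rho_B)$ is well-defined as ``the'' entanglement entropy.
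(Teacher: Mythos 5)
Your proof is correct and, since the paper itself cites \cite[Theorem~1]{audenaert2007sharp} without further detail, it supplies exactly the routine derivation the paper leaves implicit: monotonicity of trace distance under the partial-trace channel gives $\dist_\tr(\rho_A,\sigma_A)\le\eps$, and then Audenaert's bound on a $d=2^{\abs{A}}$-dimensional system yields $\eps\log(2^{\abs{A}}-1)+\binentropy(\eps)\le\eps n+\binentropy(\eps)$. One small caveat worth noting (which is a wrinkle in the lemma as stated, not in your argument): Audenaert's sharp form $T\log(d-1)+\binentropy(T)$ is only asserted for $T\le 1-1/d$, with the bound $\log d$ in the complementary regime; this is harmless here since the paper always applies the lemma with $\eps$ tiny, but a fully rigorous statement would either add that hypothesis or observe that the trivial bound $\entropy(\rho_A)\le\min(\abs{A},\abs{B})\le n/2\le\eps n+\binentropy(\eps)$ covers the large-$\eps$ case.
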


\subsection{Symplectic Vector Spaces and Weyl Operators}
\label{sec:symplectic-vector-spaces}
There is a deep connection between quantum information and symplectic vector spaces over $\F_2$ that we leverage throughout this work. 
Many in the quantum information and theoretical computer science communities may not be familiar with this connection,
so we take care to review these notions here. 

To obtain a symplectic vector space, one must equip a vector space with a symplectic form. 

\begin{definition}[Symplectic form]\label{def:symp-bilinear-form}
Let $V$ be a vector space over a field $\F$.
A symplectic form is a mapping $\omega : V \times V \to \F$ that satisfies the following conditions. 
\begin{enumerate}
    \item Bilinear: $\omega$ is linear in each argument separately.
    \item Alternating: For all $v \in V$, $\omega(v, v) = 0$. 
    \item Non-degenerate: If for all $v \in V$, $\omega(u, v) =0$, then $u = 0$.
\end{enumerate}
\end{definition}

A symplectic vector space is a pair $(V, \omega)$, where $V$ is a vector space and $\omega$ is a symplectic form. 
We will equip $\F_2^{2n}$ with the standard symplectic form, which we refer to as the $\emph{symplectic product}$. 

\begin{definition}[Symplectic product]
   For $x, y \in \F_2^{2n}$, we define the symplectic product as $[x,y] = x_1 \cdot y_{n+1} + x_2 \cdot y_{n+2} + \dots + x_n \cdot y_{2n} + x_{n+1} \cdot y_{1} + x_{n+2} \cdot y_{2} + \dots + x_{2n} \cdot y_{n}$.
\end{definition}
In this work, one should always view $\F_2^{2n}$ as a symplectic vector space equipped with the symplectic product.

The symplectic product allows us to define the symplectic complement. 

\begin{definition}[Symplectic complement]
   Let $T \subseteq F_2^{2n}$ be a subspace. 
   The symplectic complement of $T$, denoted by $T^\perp$, is defined by 
   \[
 T^{\perp} \coloneqq \{a \in \F_2^{2n} : \forall x \in T, [x, a] = 0\}.   
   \]
\end{definition}

We will also need the notion of isotropic and symplectic subspaces. 

\begin{definition}[Isotropic subspace]
A subspace $W \subseteq \F_2^{2n}$ is isotropic when $W \subseteq W^\perp$. Equivalently, $W$ is isotropic if and only if $[w_1, w_2] = 0$ for all $w_1, w_2 \in W$.
\end{definition}

\begin{definition}[Symplectic subspace]
A subspace $W \subseteq \F_2^{2n}$ is symplectic when $W \cap W^\perp = \{0\}$.
\end{definition}

Every symplectic space has a standard basis, which we refer to as the \emph{symplectic basis}.

\begin{fact}\label{fact:symp-basis}
    Any $2d$-dimensional symplectic space over $\F_2$ has a basis $\{x_1, \dots, x_d, z_1, \dots, z_d\}$ such that
    \[
        [x_i, z_j] = \delta_{ij} \qquad\text{and}\qquad [x_i, x_j] = [z_i, z_j] = 0.
    \]
    Any basis with the above form is referred to as a symplectic basis.
\end{fact}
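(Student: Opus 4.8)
The plan is to prove this by induction on $d$, peeling off one hyperbolic pair at a time using non-degeneracy of the symplectic form. The base case $d = 0$ is vacuous, since the empty set is a symplectic basis of the zero space.

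For the inductive step, let $V$ be a symplectic space of dimension $2d$ with $d \ge 1$ and form $[\cdot,\cdot]$. First I would pick any nonzero $x_1 \in V$. By non-degeneracy there is some $v \in V$ with $[x_1, v] \ne 0$; over $\F_2$ this forces $[x_1, v] = 1$, so put $z_1 \coloneqq v$. The vectors $x_1, z_1$ are linearly independent: by the alternating property no nonzero scalar multiple of $x_1$ can pair nontrivially with $x_1$, so $z_1 \notin \langle x_1 \rangle$. Hence $W \coloneqq \langle x_1, z_1 \rangle$ is two-dimensional, and it is symplectic, because its only nonzero elements are $x_1$, $z_1$, and $x_1 + z_1$, each of which pairs nontrivially with some element of $W$, so $W \cap W^\perp = \{0\}$.

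The crux is the decomposition $V = W \oplus W^\perp$. This rests on the dimension identity $\dim W + \dim W^\perp = \dim V$, valid for any subspace $W$ of a non-degenerate symplectic space: the linear map $V \to W^*$ sending $v$ to the functional $w \mapsto [v, w]$ has kernel exactly $W^\perp$ and is surjective (it is the composition of the isomorphism $V \to V^*$ furnished by non-degeneracy with the surjective restriction map $V^* \to W^*$). Combined with $W \cap W^\perp = \{0\}$, the identity gives the direct sum. Moreover $W^\perp$ is itself a symplectic space of dimension $2(d-1)$: if $v \in W^\perp$ satisfies $[v, w] = 0$ for all $w \in W^\perp$, then since also $[v, w] = 0$ for all $w \in W$ we get $[v, w] = 0$ for all $w \in V = W \oplus W^\perp$, so $v = 0$; thus the restricted form is non-degenerate.

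Applying the inductive hypothesis to $W^\perp$ yields a symplectic basis $\{x_2, \dots, x_d, z_2, \dots, z_d\}$, and I claim $\{x_1, \dots, x_d, z_1, \dots, z_d\}$ is the desired basis of $V$: it spans $V$ by the direct sum decomposition, $[x_1, z_1] = 1 = \delta_{11}$ by construction, the relations among $x_2, \dots, z_d$ hold by induction, and every mixed term $[x_1, x_j]$, $[x_1, z_j]$, $[z_1, x_j]$, $[z_1, z_j]$ with $j \ge 2$ vanishes because $x_j, z_j \in W^\perp$ while $x_1, z_1 \in W$. The only step needing real care is the dimension identity $\dim W + \dim W^\perp = \dim V$; the rest is bookkeeping with $\F_2$ arithmetic. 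I would also remark that the argument is constructive, since each $x_i$ and $z_i$ is obtained by solving linear systems over $\F_2$ — which is what makes it usable in the algorithmic parts of the paper.
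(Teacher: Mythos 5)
Your proof is correct. The paper states \cref{fact:symp-basis} as a well-known fact and provides no proof of its own, so there is no paper argument to compare against directly. Your inductive argument—peel off a hyperbolic pair $(x_1, z_1)$, split $V = W \oplus W^\perp$ using the dimension identity $\dim W + \dim W^\perp = \dim V$, verify that $W^\perp$ inherits non-degeneracy, and recurse—is the standard textbook derivation of a symplectic (Darboux) basis, and all the steps check out, including the key surjectivity argument for $V \to W^*$. Worth noting: the paper's own \cref{lemma:symplectic-subspace-inside} reproves essentially this same peeling step from scratch (constructing $W_1 = \langle e_1, f_1\rangle$, decomposing $S = W_1 \oplus (W_1^\perp \cap S)$, iterating), in the more delicate setting of a subspace $S$ that is not assumed symplectic, so the technique you use here is already implicitly present in the paper; your proof could be cited there to shorten that argument in the special case $S = V$. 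Your closing remark that the construction is effected by solving linear systems over $\F_2$ is accurate and is indeed what makes \cref{fact:symp-basis} usable in \cref{alg:main} and in the Clifford-synthesis steps of \cref{lem:upper,lem:lower}.
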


The direct sum of two symplectic vector spaces is also symplectic.
This is a basic fact, but we include a proof for completeness.

\begin{fact}\label{fact:direct-sum-symp}
    If $V, W$ are symplectic vector spaces over $\F_2$, so is their direct sum. 
\end{fact}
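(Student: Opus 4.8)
The plan is to equip the direct sum $V \oplus W$ with the obvious candidate form and then verify the three axioms of \cref{def:symp-bilinear-form} one at a time, with essentially all the content concentrated in non-degeneracy. Concretely, writing $\omega_V$ and $\omega_W$ for the given symplectic forms, I would define $\omega \colon (V \oplus W) \times (V \oplus W) \to \F_2$ by
\[
\omega\bigl((v_1, w_1),(v_2, w_2)\bigr) \coloneqq \omega_V(v_1, v_2) + \omega_W(w_1, w_2).
\]
(In the concrete setting of this paper, where $V \subseteq \F_2^{2n}$ and $W \subseteq \F_2^{2m}$ carry the symplectic product, this is exactly the symplectic product on $\F_2^{2(n+m)}$ after interleaving the $X$- and $Z$-blocks appropriately; I would remark on this so the reader sees the statement is the natural ``stacking'' operation.)

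Next I would check bilinearity and the alternating property, both of which are immediate. Bilinearity of $\omega$ in each argument follows coordinatewise from the bilinearity of $\omega_V$ and $\omega_W$. For the alternating property, $\omega\bigl((v,w),(v,w)\bigr) = \omega_V(v,v) + \omega_W(w,w) = 0 + 0 = 0$ since each summand is alternating. Neither step requires more than a line.

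The one substantive point is non-degeneracy, and even this is short. Suppose $(u_1, u_2) \in V \oplus W$ satisfies $\omega\bigl((u_1,u_2),(v,w)\bigr) = 0$ for all $(v,w) \in V \oplus W$. Specializing to $w = 0$ gives $\omega_V(u_1, v) = 0$ for every $v \in V$, whence $u_1 = 0$ by non-degeneracy of $\omega_V$; symmetrically, specializing to $v = 0$ gives $u_2 = 0$. Hence $(u_1, u_2) = 0$, so $\omega$ is non-degenerate. This confirms $(V \oplus W, \omega)$ is a symplectic vector space. The only ``obstacle'' worth flagging is purely cosmetic: if one states the fact with the concrete symplectic product rather than an abstract form, one must be slightly careful that the coordinate convention (first $n$ entries are the $X$-part, last $n$ the $Z$-part) is respected when forming the direct sum, i.e. one interleaves blocks rather than naively concatenating; with that bookkeeping done, the argument above applies verbatim.
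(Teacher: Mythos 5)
Your proof is correct and follows essentially the same approach as the paper: define the coordinatewise form on the direct sum, note bilinearity and the alternating property are immediate, and establish non-degeneracy by fixing one coordinate. If anything your non-degeneracy step is slightly cleaner — you specialize to $(v,0)$ and $(0,w)$ directly, whereas the paper picks an arbitrary $v'$ with $\omega_V(v,v')=0$ before drawing the same conclusion.
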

\begin{proof}
    Denote the symplectic forms on $V$ and $W$ by $\omega_V$ and $\omega_W$, respectively.
    Let $A \coloneqq V \oplus W$, where $V \oplus W = \{(v,w) : v \in V, w \in W\}$. 
    Define the form $\omega_A$ on $A$ by $\omega_A(a_1, a_2)= \omega_V(v_1, v_2) + \omega_W(w_1, w_2)$ where $a_i = v_i + w_i$.
    We will prove that $(A, \omega_A)$ is symplectic.
    
    It is obvious that $A$ is a vector space, so it remains to prove that $\omega_A$ is a symplectic form. 
    Recall from \cref{def:symp-bilinear-form} that we must show that $\omega_A$ is bilinear, alternating, and non-degenerate.
    It is clear that $\omega_A$ is bilinear because it is the sum of two bilinear forms.
    For $a = (v,w) \in A$, we have $\omega_A(a, a) = \omega_V(v, v) + \omega_W(w, w) = 0$, so $\omega_A$ is alternating.
    
    Finally, we prove that $\omega_A$ is non-degenerate.
    Suppose we have an element $a = (v,w)$ such that for all $a' \in A$ we have $\omega_A(a, a')= \omega_V(v, v') + \omega_W(w, w')  = 0$.
    Now choose a $v' \in V$ such that $[v, v'] = 0$ (one must exist since $V$ is symplectic). 
    $\omega_A(a, a')= \omega_V(v, v') + \omega_W(w, w') = 0$ by assumption, and, because $\omega_V(v, v') = 0$, it follows that $\omega_W(w, w') = 0$ for all $w' \in W$. Since $\omega_W$ is non-degenerate, $w = 0$. A similar argument shows that $v = 0$. Therefore, $\omega_A$ is non-degenerate.
\end{proof}

Each element of $\F_2^{2n}$ can be identified with a \emph{Weyl operator}. 
For $x = (a,b) \in \F_2^{2n}$, let $a^\prime, b^\prime$ be the embeddings of $a, b$ into $\Z^n$, respectively. Then the Weyl operator $W_x$ is defined as 
\[
W_x \coloneqq i^{a^\prime \cdot b^\prime} (X^{a_1}Z^{a_1}) \otimes \dots \otimes (X^{a_n}Z^{b_n}).
\]
The symplectic structure of $\F_2^{2n}$ respects the commutation relations of the Weyl operators. Specifically, for $x,y \in \F_2^{2n}$, $[x,y] = 0$ iff $W_xW_y = W_y W_x$.
Therefore, working with symplectic vector spaces lets us discard cruft while retaining relevant algebraic structure.
We also note that Weyl operators form an orthogonal basis of $\C^{2^n \times 2^n}$ with respect to the Hilbert-Schmidt inner product, so every quantum state and unitary transformation can be written as a linear combination of Weyl operators. 

We define $\calZ \coloneqq 0^n \times \F_2^n$ as the subset of $\F_2^{2n}$ corresponding to Pauli-$Z$ strings.
We define the unsigned stabilizer group of a quantum state as the subspace of $\F_2^{2n}$ that stabilizes or anti-stabilizes the state.

\begin{definition}[Unsigned stabilizer group]\label{def:unsigned-stabilizer-group}
Given an $n$-qubit quantum state $\ket\psi$, $\weyl(\ket\psi) \coloneqq \{ x \in \F_2^{2n} : W_x \ket\psi = \pm \ket\psi\}$ is the unsigned stabilizer group of $\ket\psi$.
\end{definition}

It is easy to verify that $\weyl(\ket\psi)$ is an isotropic subspace.
We define the stabilizer dimension, which quantifies the size of the Pauli group stabilizing a given state.

\begin{definition}[Stabilizer dimension]\label{def:stabilizer-dimension}
    Let $\ket{\psi}$ be a $n$-qubit pure state. The \emph{stabilizer dimension} of $\ket{\psi}$ is the dimension of $\weyl(\ket{\psi})$ as a subspace of $\F_2^{2n}$.
\end{definition}

The stabilizer dimension of a state is closely related to the number of non-Clifford gates required to prepare it.

\begin{fact}[{\cite[Lemma 4.2]{grewal2023improved}}]
\label{lem:stab_dim}
    Let $\ket{\psi}$ be an $n$-qubit state which is the output of a Clifford circuit with at most $t$ single-qubit non-Clifford gates. Then $\ket{\psi}$ has stabilizer dimension at least $n-2t$.
\end{fact}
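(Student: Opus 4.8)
The plan is to induct on the number $t$ of single-qubit non-Clifford gates, using two facts: Clifford unitaries preserve stabilizer dimension exactly, and a single-qubit unitary can decrease it by at most $2$. Concretely, I would write the circuit in the canonical form $C_t U_t C_{t-1} U_{t-1} \cdots C_1 U_1 C_0$, where each $C_i$ is a Clifford unitary (possibly the identity) obtained by merging consecutive Clifford gates, and each $U_i$ is a single-qubit gate acting on some qubit $j_i$. Since any stabilizer input state has the form $C\ket{0^n}$ for a Clifford $C$, I may absorb it into $C_0$ and assume $\ket{\psi} = C_t U_t \cdots C_1 U_1 C_0\ket{0^n}$. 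Define $\ket{\psi_0} \coloneqq C_0\ket{0^n}$ and $\ket{\psi_k} \coloneqq C_k U_k\ket{\psi_{k-1}}$, so $\ket{\psi} = \ket{\psi_t}$; it then suffices to establish $\dim\weyl(\ket{\psi_0}) = n$ and $\dim\weyl(\ket{\psi_k}) \ge \dim\weyl(\ket{\psi_{k-1}}) - 2$ for every $k$, since these combine to $\dim\weyl(\ket{\psi}) \ge n - 2t$.

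For the base case and the Clifford invariance, I would use that $\weyl(\ket{0^n}) = \calZ$ has dimension $n$, and that a Clifford $C$ induces an invertible (symplectic) map $S$ on $\F_2^{2n}$ with $C W_x C^\dagger = \pm W_{Sx}$, from which $\weyl(C\ket{\phi}) = S\,\weyl(\ket{\phi})$ and hence $\dim\weyl(C\ket{\phi}) = \dim\weyl(\ket{\phi})$; in particular the Clifford $C_k$ in $\ket{\psi_k} = C_k(U_k\ket{\psi_{k-1}})$ leaves the stabilizer dimension unchanged, so only the single-qubit gate $U_k$ can reduce it.

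For the single-qubit step, fix a state $\ket{\phi}$ and a single-qubit unitary $U$ acting on qubit $j$, and consider the subspace $L \coloneqq \{x=(a,b)\in\F_2^{2n} : a_j = b_j = 0\}$, which is cut out by two linear constraints and so has codimension $2$. By the definition of Weyl operators, $x\in L$ precisely when the $j$-th tensor factor of $W_x$ is the identity, i.e. $W_x = W'_x \otimes I_j$ for an operator $W'_x$ on the other $n-1$ qubits; such a $W_x$ commutes with $U$. Hence for any $x\in\weyl(\ket{\phi})\cap L$ we get $W_x(U\ket{\phi}) = U W_x\ket{\phi} = \pm U\ket{\phi}$, so $x\in\weyl(U\ket{\phi})$, giving $\weyl(\ket{\phi})\cap L \subseteq \weyl(U\ket{\phi})$. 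Since intersecting the subspace $\weyl(\ket{\phi})$ with the codimension-$2$ subspace $L$ drops dimension by at most $2$, we conclude $\dim\weyl(U\ket{\phi}) \ge \dim\weyl(\ket{\phi}) - 2$; applying this with $\ket{\phi} = \ket{\psi_{k-1}}$ and $U = U_k$ completes the induction.

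The main thing requiring care is the bookkeeping for the \emph{unsigned} stabilizer group: one should check that $W_x\ket{\phi} = \pm\ket{\phi}$ (not $=\ket{\phi}$) is the correct invariant and is preserved through the argument, that the global phase $i^{a'\cdot b'}$ in the definition of $W_x$ is unaffected by zeroing out the $j$-th coordinates so that $W_x$ genuinely factors as $W'_x\otimes I_j$, and that Cliffords act as symplectic maps on $\F_2^{2n}$ (as recalled in \cref{sec:symplectic-vector-spaces}). None of these is deep, but each is a place the argument could slip if handled loosely; the conceptual content is entirely captured by the codimension-$2$ intersection bound above.
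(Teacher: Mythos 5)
Your proof is correct, and it takes what I believe is the standard route for this kind of bound. The paper doesn't prove this fact itself --- it imports it as \cite[Lemma 4.2]{grewal2023improved} --- so there is no in-paper proof to compare against, but your argument is the natural one and (to the best of my recollection) matches the proof in the cited reference.

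The three pieces all check out: (i) the canonical decomposition $C_t U_t \cdots C_1 U_1 C_0$ is without loss of generality since consecutive Cliffords merge; (ii) Cliffords act as invertible symplectic maps on $\F_2^{2n}$, so they preserve $\dim\weyl(\cdot)$; and (iii) for a single-qubit gate $U$ on qubit $j$, the subspace $L = \{(a,b) : a_j = b_j = 0\}$ is codimension $2$, every $W_x$ with $x \in L$ factors as $W'_x \otimes I_j$ (the phase $i^{a'\cdot b'}$ is unaffected since $a'_j b'_j = 0$) and hence commutes with $U$, giving $\weyl(\ket{\phi}) \cap L \subseteq \weyl(U\ket{\phi})$ and the drop of at most $2$ per non-Clifford gate. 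One small presentational point: you could drop the remark about ``absorbing'' a stabilizer input state into $C_0$, since the statement already presupposes the circuit starts from $\ket{0^n}$; but this is cosmetic, not a gap.
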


For a Clifford circuit $C$ and any $x \in \F_2^{2n}$, we define $C(x)$ to be the $y \in \F_2^{2n}$ such that $W_y = \pm CW_xC^\dagger$. We can extend this notation to subsets $S$ of $\F_2^{2n}$ by writing $C(S) = \{C(x) : x \in S\} $. Conjugation by any Clifford circuit is an automorphism of the Pauli group. Furthermore, $C(x)$ preserves the symplectic form.

\begin{fact}
  For any Clifford circuit $C$ and $x, y \in \F_2^{2n}$, $[C(x), C(y)] = [x,y]$.
\end{fact}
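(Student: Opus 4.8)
The plan is to deduce the preservation of the symplectic product from the two facts recalled just above \cref{def:unsigned-stabilizer-group}: that $[x,y] = 0$ if and only if the Weyl operators $W_x$ and $W_y$ commute, and that conjugation by a Clifford circuit is an automorphism of the Pauli group. Since both $[x,y]$ and $[C(x),C(y)]$ take values in $\F_2 = \{0,1\}$, it suffices to show the single equivalence $[C(x),C(y)] = 0 \Leftrightarrow [x,y] = 0$; equality of the two symplectic products then follows immediately.

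First I would unpack the definition of $C(\cdot)$: by definition $W_{C(x)} = \pm\, C W_x C^\dagger$ and $W_{C(y)} = \pm\, C W_y C^\dagger$, where here one implicitly uses that such a preimage exists and is unique (the Weyl operators are linearly independent, so $W_y$ determines $y$, and in particular $W_y \neq -W_y$). The global signs $\pm 1$ are irrelevant to commutation, so $W_{C(x)}$ commutes with $W_{C(y)}$ if and only if $C W_x C^\dagger$ commutes with $C W_y C^\dagger$. Conjugating the relation $W_x W_y = W_y W_x$ (or its negation) by $C$ and back shows that $C W_x C^\dagger$ commutes with $C W_y C^\dagger$ if and only if $W_x$ commutes with $W_y$. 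Chaining these equivalences with the stated characterization of commutation via the symplectic product, applied once to the pair $(C(x),C(y))$ and once to $(x,y)$, gives
\[
[C(x),C(y)] = 0 \ \Leftrightarrow\ W_{C(x)} W_{C(y)} = W_{C(y)} W_{C(x)} \ \Leftrightarrow\ W_x W_y = W_y W_x \ \Leftrightarrow\ [x,y] = 0,
\]
which is exactly what is needed.

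There is essentially no hard step here; the only points requiring a moment of care are that $C(x)$ is well-defined (uniqueness of the preimage Weyl operator) and that one only needs the coarse commute-or-anticommute dichotomy, not the finer phase relation $W_x W_y = (-1)^{[x,y]} W_y W_x$. If one did want the stronger phase identity, it follows from the fact that any two Weyl operators either commute or anticommute together with the stated equivalence, but for this Fact the coarse version suffices.
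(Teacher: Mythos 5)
Your proof is correct and rests on the same core idea as the paper's: conjugation by a unitary preserves commutation relations, and the symplectic product encodes commutation of Weyl operators. The paper explicitly carries the phase identity $W_x W_y = (-1)^{[x,y]} W_y W_x$ through the conjugation calculation, whereas you observe that since $[\,\cdot\,,\,\cdot\,]$ is $\{0,1\}$-valued it suffices to preserve the commute/anticommute dichotomy; this is a mild streamlining of the same argument rather than a different route.
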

\begin{proof}
    Recall that $W_{C(x)}W_{C(y)} = (-1)^{[C(x), C(y)]}W_{C(y)}W_{C(x)}$. Suppose that $W_{C(x)} = (-1)^{c_1}CW_xC^\dagger$ and $W_{C(y)} = (-1)^{c_2}CW_yC^\dagger$. We have
    \begin{align*}
    W_{C(x)}W_{C(y)} &= (-1)^{c_1 + c_2} CW_xC^\dagger C W_y C^\dagger \\
    &= (-1)^{c_1 + c_2} CW_x W_y C^\dagger \\
    &= (-1)^{[x,y]} (-1)^{c_1 + c_2}    CW_y W_x C^\dagger \\
     &= (-1)^{[x,y]} (-1)^{c_1 + c_2}    CW_yC^\dagger C W_x C^\dagger \\
     &= (-1)^{[x,y]}  W_{C(y)} W_{C(x)}.
    \end{align*}
    Thus $[C(x), C(y)] = [x,y]$.
\end{proof}

Since the inverse of any Clifford circuit is itself a Clifford circuit, we have the following as a simple corollary:

\begin{corollary}
    Given a subspace $H \subseteq \F_2^{2n}$ and a Clifford circuit $C$, $H$ is isotropic (resp. symplectic) if and only if $C(H)$ is isotropic (resp. symplectic).
\end{corollary}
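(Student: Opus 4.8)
The plan is to reduce the corollary to the preceding \cref{fact} that Clifford conjugation preserves the symplectic form, together with the observation that $C^{-1}$ is itself a Clifford circuit. First I would note that since $C$ is a Clifford circuit, the map $x \mapsto C(x)$ is a bijection of $\F_2^{2n}$ onto itself: it is linear (conjugation distributes over the Pauli group multiplication, which corresponds to addition in $\F_2^{2n}$), and it has inverse $x \mapsto C^{-1}(x)$ because $C^{-1}$ is also Clifford and $C^{-1}(C(x)) = x$. Hence $C(H)$ is again a subspace of $\F_2^{2n}$, of the same dimension as $H$.

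Next I would establish the key identity $C(H^\perp) = C(H)^\perp$. For the inclusion $C(H^\perp) \subseteq C(H)^\perp$: take $a \in H^\perp$ and any $y = C(x) \in C(H)$ with $x \in H$; then $[C(x), C(a)] = [x,a] = 0$ by the \cref{fact}, so $C(a) \in C(H)^\perp$. The reverse inclusion follows by applying the same argument to the Clifford circuit $C^{-1}$ and the subspace $C(H)$, or equivalently by a dimension count since both $C(H^\perp)$ and $C(H)^\perp$ have dimension $2n - \dim H = 2n - \dim C(H)$ and one contains the other.

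With this identity in hand, both equivalences are immediate. If $H$ is isotropic, i.e. $H \subseteq H^\perp$, then applying the bijection $C$ gives $C(H) \subseteq C(H^\perp) = C(H)^\perp$, so $C(H)$ is isotropic; the converse follows by running the argument with $C^{-1}$. Similarly, if $H$ is symplectic, i.e. $H \cap H^\perp = \{0\}$, then since $C$ is injective and $C(\{0\}) = \{0\}$ we get $C(H) \cap C(H)^\perp = C(H) \cap C(H^\perp) = C(H \cap H^\perp) = C(\{0\}) = \{0\}$, so $C(H)$ is symplectic, and again the converse follows by symmetry.

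There is essentially no obstacle here: the only point requiring a moment's care is that $C(\,\cdot\,)$ commutes with intersection, $C(H \cap H^\perp) = C(H) \cap C(H^\perp)$, which holds precisely because $C$ is a bijection; everything else is a direct application of the already-proven \cref{fact}.
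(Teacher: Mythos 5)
Your proof is correct, and it fills in exactly the argument the paper leaves implicit: the paper does not write out a proof but merely observes that the result follows from the preceding fact $[C(x),C(y)]=[x,y]$ together with the closure of the Clifford group under inverses. Your intermediate identity $C(H^\perp)=C(H)^\perp$ is a clean way to organize those two ingredients, and the bijectivity/linearity points you flag are precisely the details one needs to check.
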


Finally, we remark on Bell difference sampling \cite{montanaro-bell-sampling, gross2021schur}, an algorithmic primitive used in this work. 
Define $p_\psi(x) \coloneqq 2^{-n} \braket{\psi|W_x|\psi}^2$.
Bell difference sampling is a quantum measurement that takes four copies of a state $\ket\psi$, and produces a sample $x \in \F_2^{2n}$ drawn from the distribution $q_\psi$ which is defined as 
\[
q_\psi(x) \coloneqq \sum_{a \in \F_2^{2n}} p_\psi(a) p_\psi(x+a).
\]
This process takes $O(n)$ time.
We refer readers to \cite[Section 2]{grewal2023improved} for further detail.

\section{Entanglement Entropy Bounds}\label{sec:bounds}

We prove upper and lower bounds on the entanglement entropy across any cut of qubits for any $n$-qubit quantum state $\ket\psi$. 
The quality of our bounds depends on $\weyl(\ket\psi)$. 
For example, if $\dim(\weyl(\ket \psi)) = 0$, our bounds become trivial, and, if $\dim(\weyl(\ket \psi)) = n$ (i.e., $\ket\psi$ is a stabilizer state), our bounds are tight, recovering the main result of \cite{fattal2004entanglement}.

To state our bounds, we must introduce some notation.
Let $A \subseteq [n]$ be a subset of qubits, and let $S$ be a subspace of $\F_2^{2n}$. We denote by $S_A$ the intersection of $S$ with operators that act only on qubits indexed by $A$. In symbols, we can express this as follows:
\[
S_A \coloneqq \{(x, z) \in S : \forall i \in [n] \setminus A, x_i = z_i = 0\}.
\]
So, for example, 
\[
\calZ_A \coloneqq \{(0^n, z) \in \F_2^{2n} : \forall i \in [n] \setminus A, z_i = 0\}
\]
are essentially the Pauli-$Z$ strings that act on the qubits indexed by $A$.

Computationally speaking, one can compute $S_A$ efficiently, given a basis of $S$, by solving a system of linear constraints to zero all coordinates corresponding to $i \in [n] \setminus A$.

In the remainder of this section, we prove the following theorem. 
\begin{theorem}
\label{thm:main-entanglement-bounds}
    Let $\rho = \ket{\psi}\!\!\bra{\psi}$ be an $n$-qubit quantum state and let $A \sqcup B = [n]$ be a partition of qubits. Then
    \[
        \dim\left(\weyl(\ket \psi)\right) - \dim\left(\weyl(\ket \psi)_B \right) - \abs{A}
        \le \entropy(\rho_A) \le \abs{A} - \dim\left(\weyl(\ket \psi)_A\right).
    \]
\end{theorem}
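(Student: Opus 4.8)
The plan is to prove the two inequalities separately, in each case exhibiting an explicit local Clifford circuit (acting only on $A$ for the upper bound, and on $A$ and $B$ separately for the lower bound) that reveals the entanglement structure, then invoking the fact that local Cliffords do not change the entanglement entropy across $(A,B)$. Throughout I will use that $\entropy(\rho_A)$ is invariant under $(C_A \otimes C_B)$ for Clifford (indeed any unitary) circuits $C_A, C_B$ supported on $A, B$ respectively, and that $\weyl(\cdot)$ transforms covariantly: $\weyl((C_A\otimes C_B)\ket\psi) = (C_A\otimes C_B)(\weyl(\ket\psi))$, with the subspaces $\weyl(\cdot)_A, \weyl(\cdot)_B$ transforming accordingly since the circuits are local.

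For the upper bound, write $a \coloneqq \dim(\weyl(\ket\psi)_A)$. The subspace $\weyl(\ket\psi)_A$ is isotropic (being a subspace of the isotropic space $\weyl(\ket\psi)$), so by a standard Clifford normal-form argument there is a Clifford circuit $C_A$ supported on $A$ with $C_A(\weyl(\ket\psi)_A) \subseteq \calZ_A$, in fact mapping onto a set of $a$ single-qubit $Z$ operators on distinct qubits of $A$ (I would cite the relevant normal-form fact from \cite{grewal2023efficient, grewal2023efficient2} or prove it by induction on $a$ using \cref{fact:symp-basis}). After applying $C_A \otimes I_B$, the state is stabilized (up to sign) by $Z$ on each of those $a$ qubits; since each such qubit is a $\pm1$ eigenvector of its own $Z$, it is in a computational basis state and hence unentangled from the rest. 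Thus the state factors as a product of those $a$ pure qubits with a pure state on the remaining $n - a$ qubits, and the entanglement entropy across $(A,B)$ is at most $\abs{A} - a$. The key point to get right is that $Z$-stabilization of individual qubits really does force a tensor-product decomposition — this follows because if $Z_j\ket\phi = \pm\ket\phi$ then the reduced state on qubit $j$ is pure, forcing $\ket\phi$ to factor at qubit $j$.

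For the lower bound, the idea is to extract EPR pairs. I would work with $\weyl(\ket\psi)$ directly and consider its images under the quotient by the "non-local" part. Concretely, let $d \coloneqq \dim(\weyl(\ket\psi))$ and $b \coloneqq \dim(\weyl(\ket\psi)_B)$; I want to produce $d - b - \abs{A}$ EPR pairs across the cut. The cleanest route: consider the restriction-to-$A$-coordinates map $\pi_A : \weyl(\ket\psi) \to \F_2^{2\abs{A}}$. Its kernel is exactly $\weyl(\ket\psi)_B$, so the image $\pi_A(\weyl(\ket\psi))$ has dimension $d - b$. This image need not be isotropic; the radical of the symplectic form restricted to it has some dimension $r$, and the image is the direct sum of an $r$-dimensional isotropic piece and a symplectic subspace of dimension $(d-b) - r$, which therefore contains $\tfrac{(d-b)-r}{2}$ pairs of locally anticommuting operators. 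Each such pair, because it anticommutes on $A$ but the original operators commute globally, must anticommute on $B$ as well; by \cref{fact:symp-basis} applied on each side there are local Cliffords $C_A$ (on $A$) and $C_B$ (on $B$) sending each such pair to $X_jX_{j'}$ and $Z_jZ_{j'}$ for a fresh qubit $j \in A$, $j' \in B$, i.e.\ each pair becomes an EPR pair. Since EPR pairs contribute $1$ bit each and the remaining qubits still carry nonnegative entropy, $\entropy(\rho_A) \ge \tfrac{(d-b)-r}{2}$. Finally I bound $r \le \abs{A} - (\text{something})$; the crude bound $r \le 2\abs{A} - (d-b)$ combined with a parity/dimension count is not quite enough, so the actual statement $\entropy(\rho_A) \ge d - b - \abs{A}$ should come from choosing a \emph{maximal isotropic} subspace of $\pi_A(\weyl(\ket\psi))$, which has dimension at most $\abs{A}$, leaving a symplectic complement of dimension at least $(d-b) - \abs{A}$ inside which we find $\lceil \tfrac{(d-b)-\abs{A}}{2}\rceil$ — hmm, this still loses a factor of two, so more care is needed: the right count is that an $m$-dimensional subspace of $\F_2^{2\abs{A}}$ whose maximal isotropic subspace has dimension $\le \abs{A}$ contains at least $m - \abs{A}$ "independent anticommuting directions" each worth half a bit is wrong; rather, one should directly count EPR pairs as (dimension of symplectic part)$/2$ and note the symplectic part has dimension at least $2((d-b) - \abs{A})$ when $d - b > \abs{A}$ because any isotropic subspace of the $(d-b)$-dimensional image injects into $\F_2^{\abs{A}}$ via... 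I would pin this down by the cleaner argument: $\weyl(\ket\psi)_B$ together with a maximal isotropic subspace of $\weyl(\ket\psi)$ containing it has dimension at most $b + (n - |B|) + \dots$, ultimately reducing to: the number of EPR pairs extractable equals $\tfrac12(\dim \weyl(\ket\psi) - \dim \weyl(\ket\psi)_A - \dim\weyl(\ket\psi)_B + (\text{correction}))$, and I'd reconcile this with the claimed bound via $\dim\weyl(\ket\psi)_A \le \abs{A}$ and the isotropy of $\weyl(\ket\psi)$.

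The main obstacle is the bookkeeping in the lower bound: correctly identifying the symplectic subspace of $\pi_A(\weyl(\ket\psi))$ whose dimension is at least $2(\dim\weyl(\ket\psi) - \dim\weyl(\ket\psi)_B - \abs{A})$, and verifying that the locally-anticommuting pairs can be simultaneously normalized by a \emph{single} pair of local Cliffords $C_A, C_B$ (this is where \cref{fact:symp-basis}, the structure theorem for symplectic spaces, does the work, but one must check that the bases on the $A$ side and $B$ side can be chosen compatibly so that the $i$-th pair maps to $X_{j_i}X_{j_i'}, Z_{j_i}Z_{j_i'}$ with matched qubit labels). The upper bound, by contrast, is routine once the normal-form lemma is stated. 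I would present the upper bound first as a warm-up, then the lower bound, and close by noting both circuits are polynomial-time computable, as needed for the algorithmic application.
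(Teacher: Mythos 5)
Your upper bound is correct and matches the paper's argument in \cref{lem:upper} exactly: conjugate by a Clifford local to $A$ mapping $\weyl(\ket\psi)_A$ onto $\calZ_{A'}$, and observe the resulting computational-basis qubits on $A'$ are unentangled from everything else. No issues there.

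Your lower bound has the right high-level idea (project $\weyl(\ket\psi)$ to the $A$ coordinates, find a symplectic subspace of the image, distill EPR pairs with local Cliffords), but the execution has two genuine gaps, both of which the paper closes and which you flag but do not resolve. First, you talk yourself out of your own correct bound. After decomposing $\pi_A(\weyl(\ket\psi))$ into a radical of dimension $r$ plus a symplectic part of dimension $(d-b)-r$, you need $\tfrac{(d-b)-r}{2} \ge d-b-|A|$, equivalently $r \le 2|A|-(d-b)$ --- exactly the ``crude bound'' you wrote down and then dismissed as ``not quite enough.'' It is enough, and it is true for a one-line reason you never state: inside the $2|A|$-dimensional symplectic space $(\F_2^{2n})_A$, nondegeneracy gives $\dim(S^A) + \dim\bigl((S^A)^{\perp}\bigr) = 2|A|$, and the radical $S^A \cap (S^A)^{\perp}$ is contained in $(S^A)^{\perp}$, which has dimension $2|A|-(d-b)$. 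This is precisely the content the paper isolates as \cref{lemma:symplectic-subspace-inside} (a subspace of dimension $v+k$ inside a $2v$-dimensional symplectic space contains a symplectic subspace of dimension at least $2k$). Once you believe this, the rest of your tangent about maximal isotropic subspaces is unnecessary.

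Second, the concern you raise about whether a single $C_A$ and a single $C_B$ can normalize all the pairs with matched qubit labels is real, and you leave it open. The resolution is the key step the paper's proof of \cref{lem:lower} supplies: lift a symplectic basis $\{t_i^A\}$ of the symplectic part back to elements $t_i \in \weyl(\ket\psi)$, let $t_i^B$ be the $B$-restriction so that $t_i = t_i^A + t_i^B$, and use that $\weyl(\ket\psi)$ is isotropic to conclude $0 = [t_i,t_j] = [t_i^A,t_j^A] + [t_i^B,t_j^B]$, hence $[t_i^B,t_j^B] = [t_i^A,t_j^A]$. Thus $\{t_i^B\}$ is automatically a symplectic basis of a symplectic subspace of $(\F_2^{2n})_B$ with the same pairing indices, and \cref{fact:symp-basis} applied on each side gives a $C_A$ on $A$ and a $C_B$ on $B$ sending the $i$-th pair to $X_{a_i}X_{b_i}$ and $Z_{a_i}Z_{b_i}$. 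One should also verify, as the paper does, that the images of the $t_i$ after conjugation are supported only on $A' \sqcup B'$ (since $t_i = t_i^A + t_i^B$ and the Cliffords are local) and that $C^B(C^A(T))_{B'} = \{0\}$, so that the distilled stabilizer state on $A' \sqcup B'$ is genuinely a product of $\dim(T)/2$ Bell pairs contributing one bit each. Without these verifications the argument does not actually reach the stated bound, even though the intuition is right.
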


\subsection{Proof of Upper Bound}

\begin{lemma}
\label{lem:upper}
    Let $\rho = \ketbra{\psi}{\psi}$ be an $n$-qubit quantum state and let $A \subseteq [n]$ be a partition of qubits. Then
    \[
        \entropy(\rho_A) \le \abs{A} - \dim\left(\weyl(\ket \psi)_A\right).
    \]
\end{lemma}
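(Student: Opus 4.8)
The plan is to realize the intuition sketched in the introduction: produce a Clifford circuit supported entirely on $A$ that ``diagonalizes'' $\weyl(\ket\psi)_A$, thereby peeling $a \coloneqq \dim(\weyl(\ket\psi)_A)$ qubits of $A$ off into a product computational-basis state, and then bound the entanglement entropy of what remains trivially by the number of leftover qubits.

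First I would observe that $\weyl(\ket\psi)$ is isotropic and that $\weyl(\ket\psi)_A$ consists only of elements supported on $A$, so we may regard $\weyl(\ket\psi)_A$ as an $a$-dimensional isotropic subspace of the symplectic space $\F_2^{2\abs{A}}$ attached to the qubits of $A$. Next I would invoke the standard fact that every $a$-dimensional isotropic subspace of $\F_2^{2\abs{A}}$ is carried by some symplectic transformation onto the coordinate subspace $\langle z_1, \dots, z_a \rangle$ spanned by $a$ of the standard Pauli-$Z$ generators (extend a basis of the isotropic subspace to a symplectic basis using \cref{fact:symp-basis} and change coordinates), together with the fact that every symplectic transformation on $\F_2^{2\abs{A}}$ is induced by conjugation by some Clifford unitary on those $\abs{A}$ qubits, up to phases --- which are irrelevant here because we work with the unsigned group $\weyl$. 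Call the resulting circuit $C_A$; extended by the identity on $B$, it is a Clifford circuit acting only on $A$ with $C_A(\weyl(\ket\psi)_A) = \langle z_{i_1}, \dots, z_{i_a} \rangle$ for distinct $i_1, \dots, i_a \in A$, i.e.\ the corresponding Weyl operators are exactly the group generated by the single-qubit operators $Z_{i_1}, \dots, Z_{i_a}$.

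Then I would transfer this back to the state. For $x \in \weyl(\ket\psi)_A$ we have $W_x \ket\psi = \pm \ket\psi$, hence $W_{C_A(x)} C_A\ket\psi = \pm C_A\ket\psi$; in particular $C_A\ket\psi$ is a $\pm 1$-eigenvector of $Z_{i_j}$ for each $j$, so qubits $i_1, \dots, i_a$ are each in a computational-basis state and unentangled from the rest. Thus $C_A\ket\psi = \ket{b} \otimes \ket\phi$, where $\ket b$ is a computational-basis state on qubits $\{i_1, \dots, i_a\}$ and $\ket\phi$ is a pure state on the remaining $n - a$ qubits. Since $C_A$ acts only on $A$ it commutes with $\tr_B$ and is unitary, so $\entropy(\rho_A) = \entropy\bigl((C_A \rho C_A^\dagger)_A\bigr)$; and $(C_A \rho C_A^\dagger)_A = \ketbra{b}{b} \otimes \phi_{A'}$ with $A' \coloneqq A \setminus \{i_1, \dots, i_a\}$, so its entropy equals $\entropy(\phi_{A'}) \le \abs{A'} = \abs{A} - a$, using that the von Neumann entropy of any $m$-qubit state is at most $m$. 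This yields $\entropy(\rho_A) \le \abs{A} - \dim(\weyl(\ket\psi)_A)$.

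The only real obstacle is the middle step --- certifying that a \emph{local} Clifford standardizes the isotropic subspace $\weyl(\ket\psi)_A$ --- and keeping the bookkeeping ($\weyl$ versus signed stabilizers, the embedding into $\F_2^{2\abs{A}}$, the conjugation action $C_A(\cdot)$ on $\F_2^{2n}$, and the sign ambiguity in $C_A W_x C_A^\dagger = \pm W_{C_A(x)}$) internally consistent; everything after that is the trivial ``$\entropy \le$ number of qubits'' bound.
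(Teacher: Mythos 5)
Your proposal is correct and follows the same approach as the paper's proof: find a local Clifford circuit on $A$ mapping $\weyl(\ket\psi)_A$ to Pauli-$Z$ operators on a subset $A' \subseteq A$ of size $\dim(\weyl(\ket\psi)_A)$, observe that this makes the qubits of $A'$ a product computational-basis state unentangled from the rest, and bound the remaining entanglement trivially by $\abs{A \setminus A'}$. The paper cites \cite[Section 3]{grewal2023efficient} for the Clifford construction whereas you spell out the symplectic-basis-extension argument explicitly, but this is the same mechanism.
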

\begin{proof}

    Let $A' \subseteq A$ be any set of size $\dim(\weyl(\ket{\psi})_A)$.
    By known techniques, one can find a Clifford circuit $C$ acting only on $A$ that maps the Paulis in $\weyl(\ket{\psi})_A$ to $\calZ_{A'}$.\footnote{An explicit algorithm for computing this $C$ can be found in \cite[Section 3]{grewal2023efficient}.} As a consequence, this $C$ behaves as $C\ket{\psi} = \ket{x}_{A'}\ket{\phi}_{[n] \setminus A'}$, where $\ket{x}_{A'}$ is a computational basis state on $A'$ and $\ket{\phi}_{[n] \setminus A'}$ is an arbitrary state on the remaining qubits. Because $C$ is local to $A$, it does not affect the entanglement entropy across the partition. Furthermore, it is clear that the qubits $\ket{x}_{A'}$ are unentangled from the rest of the system. As such, only the qubits in $A \setminus A'$ can exhibit entanglement across the partition, and there are $\abs{A} - \abs{A'} = \abs{A} - \dim\left(\weyl(\ket \psi)_A\right)$ many such qubits. So, the entanglement entropy across the partition is bounded above by $\abs{A} - \dim\left(\weyl(\ket \psi)_A\right)$.
\end{proof}

\subsection{Proof of Lower Bound}

\begin{lemma}\label{lemma:symplectic-subspace-inside}
Let $V \subseteq \F_2^{2n}$ be a symplectic subspace of dimension $2v$ and have $S \subseteq V$ be a subspace of dimension $v+k$. There exists a symplectic subspace $T \subseteq S$ with dimension at least $2k$.
\end{lemma}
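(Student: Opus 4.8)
The plan is to decompose $S$ as the radical of the restricted symplectic form plus a complementary subspace, show that the complement is symplectic, and then bound the dimension of the radical using the non-degeneracy of the form on $V$.

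Concretely, first I would let $R \coloneqq S \cap S^\perp$; equivalently, $R = \{s \in S : [s,s'] = 0 \text{ for all } s' \in S\}$ is the radical of the symplectic form restricted to $S$. Pick any subspace $T \subseteq S$ with $S = R \oplus T$. The key claim is that $T$ is symplectic, i.e. $T \cap T^\perp = \{0\}$. To see this, suppose $t \in T$ satisfies $[t,t'] = 0$ for all $t' \in T$. Any $s \in S$ decomposes as $s = r + t'$ with $r \in R$ and $t' \in T$, so $[t,s] = [t,r] + [t,t'] = [t,r] = 0$, where $[t,r] = 0$ because $r \in R$ is symplectically orthogonal to every element of $S$, in particular to $t$. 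Hence $t$ is orthogonal to all of $S$, so $t \in R \cap T = \{0\}$, proving $T$ is symplectic.

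It then remains to check that $\dim(T) \ge 2k$. Because $V$ is symplectic, the restriction of the symplectic form to $V$ is non-degenerate, and therefore for any subspace $W \subseteq V$ one has $\dim(W^\perp \cap V) = \dim(V) - \dim(W)$. Since $S \subseteq V$, we have $R = S \cap S^\perp = S \cap (S^\perp \cap V) \subseteq S^\perp \cap V$, and hence $\dim(R) \le \dim(S^\perp \cap V) = 2v - (v+k) = v - k$. Consequently $\dim(T) = \dim(S) - \dim(R) \ge (v+k) - (v-k) = 2k$, which completes the argument; note that $\dim(T)$ is automatically even since $T$ was shown to be symplectic, so this is consistent.

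The proof is short, and the only steps that require genuine care are (i) the verification that a complement of the radical is symplectic — this is the heart of the matter, and uses only bilinearity together with the defining property of the radical — and (ii) the dimension identity $\dim(W) + \dim(W^\perp \cap V) = \dim(V)$, which genuinely relies on the non-degeneracy of the form on $V$. Point (ii) is exactly where the hypothesis that $V$ is symplectic is used: without it, the radical of $S$ could be strictly larger than $v-k$, and the bound would fail.
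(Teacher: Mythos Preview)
Your proof is correct and takes a genuinely different route from the paper's. The paper proceeds by an iterative ``symplectic Gram--Schmidt'': it repeatedly picks a pair $e_i,f_i\in S$ with $[e_i,f_i]=1$, splits off the $2$-dimensional symplectic span $W_i$, and passes to $W_i^\perp\cap S$, stopping when the residual is isotropic; the bound $r\ge k$ then follows because the residual together with $\langle e_1,\dots,e_r\rangle$ is isotropic in $V$ and hence has dimension at most $v$. Your argument replaces this inductive construction with a single radical decomposition $S=R\oplus T$ and a direct dimension count $\dim R\le \dim(S^\perp\cap V)=2v-\dim S$ coming from non-degeneracy on $V$. The two arguments ultimately encode the same fact (isotropic subspaces of $V$ have dimension at most $v$), but yours is shorter and avoids the somewhat delicate verification that $S=W_1\oplus(W_1^\perp\cap S)$. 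The paper's approach has the minor advantage of producing an explicit symplectic basis $(e_i,f_i)$ along the way, which is convenient for the Clifford-circuit construction downstream; however, since any symplectic subspace admits a symplectic basis (\cref{fact:symp-basis}), your existence proof is equally sufficient for the application.
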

\begin{proof}
    Take any nonzero $e_1 \in S$. Because $\dim(S) > v$, there exists some $f_1 \in S$ such that $[e_1, f_1] = 1$.
    Let $W_1$ be the span of $e_1$ and $f_1$.
    We will prove that $S = W_1 \oplus \left(W_1^\sympcomp \cap S\right)$ is a direct sum. 
    First, we argue that $W_1 \cap  \left(W_1^\sympcomp \cap S\right) = \{ 0\}$.\footnote{In fact, because $W_1$ is symplectic, even $W_1 \cap  W_1^\sympcomp = \{ 0\}$.}
    Take $z \in W_1$. Since $W$ is the span of $e_1$ and $f_1$, we can write $z = \alpha e_1 + \beta f_1$. 
    If $z$ is also in $W^\sympcomp \cap S \subseteq W^\sympcomp$, then $0 = [x, z] = \beta$ and $0 = [y, z] = \alpha$, so $z = 0$.
    Next, we prove that any $v \in S$ can be written as a sum of $w_1 \in W_1$ and $w_1^c \in W_1^\sympcomp \cap S$. 
    Clearly $w_1 \coloneqq [v, e_1]f_1 + [v, f_1]e_1 \in W_1$, and define $w_1^c \coloneqq v + [v, e_1]f_1 + [v, f_1]e_1$.
    It is easy to check $[e_1, w_1^c] = [f_1, w_1^c] = 0$, so $w_1^c \in W_1^\sympcomp$. 
    Furthermore, if since $e_1, f_1 \in W_1 \subset S$, $w_1^c \in S$ as well.
    It is clear that $v = w_1 + w_1^c$.

Repeat this process to collect pairs $(e_1, f_1), \dots, (e_r, f_r)$ until we have that $S \cap_{j=1}^{r}W_j^\sympcomp$ doesn't contain an $e_{r+1}$ and $f_{r+1}$ such that $[e_{r+1}, f_{r+1}] = 1$. 
Observe by linearity that any object in $x \in \cap_{j=1}^{r}W_j^\sympcomp$ must have $[x, y] = 0$ for all $y \in \bigoplus_{i=1}^r W_i$.
Consequently, $S \cap_{j=1}^{r}W_j^\sympcomp \oplus \langle e_1, \ldots, e_r\rangle$ is an isotropic subspace of dimension $v + k - r$. 
Since all isotropic subspaces within $V$ must have dimension at most $v$, 
\[
v+k - r \leq v \implies r \geq k.
\]
Hence, $\bigoplus_{i=1}^r W_i$ has dimension at least $2k$. Since each $W_i$ is symplectic, their direct sum is symplectic by \cref{fact:direct-sum-symp}. 
\end{proof}

\begin{lemma}
\label{lem:lower}
    Let $\rho = \ketbra{\psi}{\psi}$ be an $n$-qubit quantum state and let $A \sqcup B = [n]$ be a partition of qubits. Then
    \[
        \entropy(\rho_A) \ge 
        \dim\left(\weyl(\ket \psi)\right) - \dim\left(\weyl(\ket \psi)_B \right) - \abs{A}.
    \]
\end{lemma}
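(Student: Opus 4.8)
The plan is to realize the informal "EPR distillation" strategy from the introduction. First I would set $d \coloneqq \dim(\weyl(\ket\psi))$, $b \coloneqq \dim(\weyl(\ket\psi)_B)$, and let $S \coloneqq \weyl(\ket\psi)$. The key structural observation is that $S$ is isotropic (as noted after Definition \ref{def:unsigned-stabilizer-group}), so $S_B$ (the elements of $S$ supported on $B$) and, by a symmetric argument, $S_A$, together generate a subspace of $S$ on which a controlled amount of "local anticommutation" lives. Concretely, I would pass to the quotient $S / S_B$, of dimension $d - b$, and argue that in this quotient we can find pairs of vectors whose $A$-parts and $B$-parts each anticommute locally — each such pair, after local Clifford circuits on $A$ and on $B$, gets mapped to the stabilizers $X\otimes X$ and $Z\otimes Z$ of an EPR pair straddling one qubit of $A$ and one of $B$. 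Counting the number of such pairs gives the entanglement lower bound, because local Cliffords do not change the entanglement entropy across $(A,B)$, and an EPR pair contributes one bit.

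The main technical step is the counting, and this is where Lemma \ref{lemma:symplectic-subspace-inside} enters. Here is how I would deploy it. Consider the restriction map $\pi \colon S \to \F_2^{2|A|}$ that keeps only the coordinates indexed by $A$ (i.e., projects each Weyl operator onto its action on $A$). Its kernel is exactly $S_B$, so the image $\pi(S)$ has dimension $d - b$. Now $\F_2^{2|A|}$ is a symplectic space of dimension $2|A|$, and although $\pi$ need not preserve the symplectic form globally, the point is that $\pi(S)$ sits inside it as a subspace of dimension $(d-b)$. Writing $d - b = |A| + (d - b - |A|)$, Lemma \ref{lemma:symplectic-subspace-inside} (with $v = |A|$ and $k = d - b - |A|$) produces a symplectic subspace $T \subseteq \pi(S)$ of dimension at least $2(d-b-|A|) = 2\bigl(\dim(\weyl(\ket\psi)) - \dim(\weyl(\ket\psi)_B) - |A|\bigr)$. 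By Fact \ref{fact:symp-basis}, $T$ has a symplectic basis of $k' \ge d - b - |A|$ anticommuting pairs $(\bar e_j, \bar f_j)$ in the $A$-coordinates.

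The last step is to convert these $k'$ locally-anticommuting pairs into $k'$ EPR pairs. For each pair, lift $\bar e_j, \bar f_j$ back to elements $e_j, f_j \in S = \weyl(\ket\psi)$; since $S$ is isotropic, $[e_j, f_j] = 0$ globally, but $[\pi(e_j), \pi(f_j)] = 1$ in the $A$-coordinates, which forces $[e_j - \pi(e_j), f_j - \pi(f_j)] = 1$ in the $B$-coordinates as well — i.e., the $A$-parts anticommute and the $B$-parts anticommute, while the full operators commute. These are exactly the "EPR-type" pairs. Using the standard fact that Clifford circuits act as symplectic automorphisms and that one can efficiently bring a symplectic subspace into standard form (as in the upper-bound proof and in \cite{grewal2023efficient}), apply a Clifford $C_A$ on $A$ and $C_B$ on $B$ mapping, for each $j$, $\pi(e_j) \mapsto X$ on the $j$-th qubit of $A$, $\pi(f_j) \mapsto Z$ on that qubit, and the corresponding $B$-parts to $X$ and $Z$ on the $j$-th qubit of $B$. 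Then $(C_A \otimes C_B)\ket\psi$ is stabilized (up to sign) by $X_{A_j} X_{B_j}$ and $Z_{A_j} Z_{B_j}$ for each $j$, hence contains $k'$ EPR pairs across the cut (after fixing the signs with local Paulis, which are also Clifford), giving $\entropy(\rho_A) \ge k' \ge \dim(\weyl(\ket\psi)) - \dim(\weyl(\ket\psi)_B) - |A|$. If the right-hand side is nonpositive the bound is vacuous, so we may assume $k \ge 1$.

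The step I expect to be the main obstacle is making the reduction to Lemma \ref{lemma:symplectic-subspace-inside} fully rigorous: one must check that the projection $\pi$ really does land $S$ inside an honest symplectic space of the right dimension, that the dimension of the image is exactly $d - b$ (not merely at most), and that the anticommutation structure recovered in the $A$-coordinates genuinely corresponds, after lifting, to simultaneously-commuting global operators with locally-anticommuting halves — the bookkeeping between "global commutation" and "local anticommutation on both sides" is the delicate part. Everything downstream (the explicit Clifford normal form and the EPR count) is routine given the machinery already set up for the upper bound.
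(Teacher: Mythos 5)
Your proof is correct and takes essentially the same approach as the paper: you project $\weyl(\ket\psi)$ onto the $A$-coordinates (via the restriction map $\pi$; the paper equivalently extends a basis of $\weyl(\ket\psi)_B$ and projects the extension), apply \cref{lemma:symplectic-subspace-inside} to extract a symplectic subspace of dimension $2(d-b-|A|)$ inside the image, lift back and use isotropy of $\weyl(\ket\psi)$ to show the $B$-parts inherit the same symplectic relations, and then distill EPR pairs with local Cliffords on $A$ and $B$. The only cosmetic difference is the finish: the paper forms a Lagrangian subspace on $A'\sqcup B'$, checks $C^B(C^A(T))_{B'}=\{0\}$, and invokes the entanglement formula of Fattal et al., whereas you directly assert the stabilizers become $XX$ and $ZZ$ on matched qubit pairs, which is valid precisely because the $A$- and $B$-parts of the lifted basis share identical symplectic relations.
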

\begin{proof}
    Let $\{b_i\}_i$ be a basis for $\weyl(\ket\psi)_B$, and let $\{e_i\}_i$ be an extension such that, together, they span $\weyl(\ket\psi)$.
    Define the subspace $S \coloneqq \langle \{e_i\}_i \rangle$. Clearly $\dim(S) = \dim(\weyl(\ket\psi)) - \dim(\weyl(\ket\psi)_B)$ and $S \cap \left(\F_2^{2n}\right)_B = \{0 \}$.
    Define $e^A_i \in \left(\F_2^{2n}\right)_A$ to be $e_i$ except with the coordinates not in $A$ set to $0$.
    
    Define $S^A \coloneqq \langle \{e^A_i\}_i \rangle$.
    \vspace{-2em}
    \begin{changemargin}{1cm}{1cm} 
    \begin{claim}
        $\dim(S^A) = \dim(S) = \dim(\weyl(\ket\psi)) - \dim(\weyl(\ket\psi)_B)$
    \end{claim}
    \begin{proof}
    $\dim(S) \geq \dim(S^A)$ is trivial, so we just need to argue that the vectors $\{e_i^A\}_i$ are linearly independent.
    For the sake of contradiction, assume they're not, i.e., that there exists some set of indices $I \subseteq [\dim(S)]$ such that $\sum_{i \in I} e^A_i = 0$.
    Note that $\sum_{i \in I} e_i \neq 0$ because is $\{e_i\}_i$ is a basis by construction.
    Therefore, $\sum_{i \in I} e_i$ will be zero on the coordinates of $A$, but not $B$. That is, $\sum_{i \in I} e_i \in \weyl(\ket\psi)_B$, a contradiction.
    We conclude that $\dim(S^A) = \dim(\weyl(\ket\psi)) - \dim(\weyl(\ket\psi)_B)$.
    \end{proof}
    \end{changemargin}
    
    $(\F_2^{2n})_A$ is a symplectic subspace, and each $e^A_i \in (\F_2^{2n})_A$. Therefore, $S^A$ is a subspace of a $2\abs{A}$-dimensional symplectic subspace. 
    By \cref{lemma:symplectic-subspace-inside}, there must exist some symplectic subspace $T^A \subseteq S^A$ of dimension at least $2\left(\dim\left(S^A\right) - \abs{A}\right)  = 2\left(\dim\left(\weyl(\ket \psi)\right) - \dim\left(\weyl(\ket \psi)_B \right) - \abs{A}\right)$.
    Let $\{t_i^A\}_i$ be a symplectic basis of $T^A$. We can express each basis element as $t_i^A = \sum_j \alpha_j e^A_j$ for some setting of $\alpha_j \in \{0, 1\}$.
    Define $t_i \coloneqq \sum_j \alpha_j e_j$, and observe that their span defines a subspace $T \subseteq \F_2^{2n}$ that shares the same dimension as $T^\prime$.
    We can then similarly define $t^B_i \in \left(\F_2^{2n}\right)_B$ to be $t_i$ except with the coordinates not in $B$ set to $0$.
    Observe that $t_i = t^A_i + t^B_i$.
    By the linearity of the symplectic product and the fact that $\weyl(\ket \psi)$ is isotropic, \[0 = [t_i, t_j] = [t^A_i, t^A_j] + [t^B_i, t^B_j] \implies [t^A_i, t^A_j] = [t^B_i, t^B_j].\]
    Therefore, $\{t_i^B\}$ is also a symplectic basis for a symplectic subspace of $\left(\F_2^{2n}\right)_{B}$.

    Using a Clifford circuit $C^A$ acting locally on the qubits in $A$, we can perform the symplectic map that takes $\{t_i^A\}$ to the symplectic basis of $\left(\F_2^{2n}\right)_{A^\prime}$ where $A^\prime \subset A$ and $\abs{A^\prime} = \dim(T)$.\footnote{Again, this mapping is described in detail in \cite[Section 3]{grewal2023efficient}.}
    Note that $C^A(t_i^B) = t_i^B$.
    Using a second Clifford circuit $C^B$ acting locally on the qubits in $B$, we can take $\{t_i^B\}$ to the symplectic basis of $\left(\F_2^{2n}\right)_{B^\prime}$ where $B^\prime \subset B$ and $\abs{B^\prime} = \abs{A^\prime} = \dim(T)$.
    Note $C^B(C^A(t_i^A)) = C^A(t_i^A)$.
    
    \vspace{-2em}
    \begin{changemargin}{1cm}{1cm} 
    \begin{claim}
        $C^B(C^A(T))$ is a Lagrangian subspace of $\left(\F_2^{2n}\right)_{A^\prime \sqcup B^\prime}$.
    \end{claim}
    \begin{proof}
        It is clear from the actions of $C^A$ and $C^B$ that each $C^B(C^A(t_i))$ is a member of $\left(\F_2^{2n}\right)_{A^\prime \sqcup B^\prime}$.
        Furthermore, $\dim\left(C^B(C^A(T))\right) = \dim(T)$, which is half the dimension of $\left(\F_2^{2n}\right)_{A^\prime \sqcup B^\prime}$.
        Finally, since $C^B\left(C^A\left(\weyl(\ket \psi)\right)\right)$ is isotropic, so too must $C^B(C^A(T))$ as a subset of $C^B\left(C^A\left(\weyl(\ket \psi)\right)\right)$.
    \end{proof}
    \end{changemargin}

    We conclude that the state of the qubits indexed by $A' \sqcup B'$ is a stabilizer state $\ket\phi$ of $2\dim(T)$ qubits that is unentangled from the rest of the system.

    Our last step is to prove that $C^B(C^A(T))_{B^\prime} = \{0\}$, which will imply that the entanglement across $(A', B')$ is $\dim(T)/2$ by \cite[Eq. 1]{fattal2004entanglement}.
    First recall that $S \cap \left(\F_2^{2n}\right)_B = \{0\}$ by construction, which implies $T \cap \left(\F_2^{2n}\right)_B = \{0\}$ because $T\subseteq S$.
    Next, we note that $C^A$ has no effect on $\left(\F_2^{2n}\right)_B$ since its action is local to $A$. 
    Furthermore, $C^B$ simply permutes $\left(\F_2^{2n}\right)_B$ (and cannot map elements into $(\F_2^{2n})_B)$.
    Hence, 
    \[
    C^B(C^A(T)) \cap  \left(\F_2^{2n}\right)_B = C^B(C^A(T)) \cap  C^B(C^A(\left(\F_2^{2n}\right)_B) = T \cap \left( \F_2^{2n}\right)_B = \{0\}. 
    \]
    Finally, since $ \left(\F_2^{2n}\right)_{B^\prime} \subseteq  \left(\F_2^{2n}\right)_{B}$,
    \[
        C^B(C^A(T))_{B^\prime}  = C^B(C^A(T)) \cap  \left(\F_2^{2n}\right)_{B^\prime} = \{ 0\}.
    \]
    
    Because $C^A$ and $C^B$ act locally on $A$ and $B$ respectively, they do not change the entanglement between $A$ and $B$, thus completing the proof.
\end{proof}

\section{The Algorithm}\label{sec:algo}

We present and analyze our algorithm for estimating the entanglement entropy across any bipartition of qubits. 
At a high level, our algorithm computes the upper and lower bounds given in \cref{thm:main-entanglement-bounds}.  
The details are presented below in \cref{alg:main}.

\vspace{\baselineskip}
\begin{algorithm}[H]
\caption{Estimating Entanglement Entropy}\label{alg:main}
\SetKwInOut{Promise}{Promise}
\KwInput{$\frac{8 \ln(1/\delta) + 16n}{\eps^2}$ copies of $\rho = \ketbra{\psi}{\psi}$, $A \sqcup B = [n]$, $\eps \in (0, 3/8)$, and $\delta \in (0,1]$}
\Promise{$\ket\psi$ has stabilizer dimension at least $n - k$} 
\KwOutput{$\ell, u \in \R$ such that $\ell  \le \entropy(\rho_A) \le u$ and $0 \le u - \ell \le k + \max\{0, 2(\eps n + \binentropy(\eps)) - 1\}$, with probability at least $1 - \delta$}

Perform Bell difference sampling to draw $\frac{2 \ln(1/\delta) + 4n}{\eps^2}$ samples from $q_\psi$. 

Let $S$ be the symplectic complement of the subspace spanned by the samples.

Let $r \coloneqq \begin{cases}
    0 & \dim(S) = n - k,\\
    \eps n + \binentropy(\eps) & \dim(S) > n - k.
\end{cases}$

Let $u \coloneqq \min \{|A| - \dim S_A, |B| - \dim S_B\} + r$.

Let $\ell \coloneqq \max \{\dim S - \dim S_B - |A|, \dim S - \dim S_A - |B|\} - r$.

\Return{$(\ell, u)$}
    
\end{algorithm}
\vspace{\baselineskip}

Let us make a few remarks on \cref{alg:main}.
Note that the bounds $(u,\ell)$ produced by our algorithm are within a range of $k + \max\{0, 2\left(\eps n + \binentropy(\eps)\right) - 1\}$ rather than $k + 2\left(\eps n + \binentropy(\eps)\right)$ as one might na\"ively expect.
This comes from a subtle case in our analysis, the details of which are contained in the proof of \cref{thm:main-with-d}. 
We observe that in the case where our sampling procedure happens to find the exact subspace $\weyl(\ket \psi)$, we no longer need to apply \cref{lem:fannes}.
Conversely, when our sampling procedure fails to find $\weyl(\ket \psi)$ exactly, the distance between the bounds in \cref{lem:lower,lem:upper} decreases by at least $1$.

We also note that the Bell difference sampling procedure only needs to be performed once. After that, one can compute entanglement entropy bounds across any cut of qubits with only classical post-processing.

To prove the correctness of \cref{alg:main}, we use the following two statements. 
The first is about the time complexity of computing symplectic complements. 
The second says that Bell difference sampling suffices to approximately recover $\weyl(\ket\psi)$.

\begin{fact}[{\cite[Lemma 3.1]{grewal2023efficient}}]\label{fact:compute-sympcomp}
    Given a set of $m$ vectors whose span is a subspace $H \subseteq \F_2^{2n}$, there is an algorithm that outputs a basis for $H^\sympcomp$ in $O\left(mn \cdot \min(m, n)\right)$ time.
\end{fact}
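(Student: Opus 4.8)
The statement is a standard linear-algebra fact: given $m$ generating vectors for a subspace $H \subseteq \F_2^{2n}$, we want a basis of the symplectic complement $H^\sympcomp$ in time $O(mn \cdot \min(m,n))$. The plan is to reduce this to ordinary orthogonal-complement computation via Gaussian elimination over $\F_2$. The key observation is that the symplectic product $[x,y]$ is just the ordinary bilinear form $x^\top J y$ over $\F_2$, where $J = \begin{pmatrix} 0 & I_n \\ I_n & 0 \end{pmatrix}$ is the $2n \times 2n$ symplectic matrix; consequently, $H^\sympcomp = (JH)^{\perp}$, where $\perp$ denotes the ordinary orthogonal complement with respect to the dot product and $JH = \{Jx : x \in H\}$ is obtained simply by swapping the top and bottom halves of each generator. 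So the algorithm is: (1) form the $m \times 2n$ matrix $M$ whose rows are the given generators, and swap its left and right $n$-column blocks to get $M' = (MJ^\top) = MJ$; (2) compute a basis for the null space $\{a \in \F_2^{2n} : M'a = 0\}$ by Gaussian elimination; (3) return that basis.

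First I would verify the algebraic identity: $a \in H^\sympcomp$ iff $[x,a]=0$ for all $x \in H$ iff $x^\top J a = 0$ for all $x$ in a generating set iff $(Jx)^\top a = 0$ for all generators $x$ (using $J^\top = J$) iff $M'a = 0$. Since $J$ is a permutation matrix, swapping blocks takes $O(mn)$ time and does not change the row space structure. Then the null space of an $m \times 2n$ matrix over $\F_2$ is computed by row-reducing $M'$ to reduced row echelon form, identifying the pivot and free columns, and reading off the standard null-space basis (one basis vector per free column). Row reduction of an $m \times 2n$ matrix over $\F_2$ costs $O(m \cdot 2n \cdot \min(m, 2n)) = O(mn \cdot \min(m,n))$ bit operations, since each of the $\min(m, 2n) = O(\min(m,n))$ pivot steps touches at most $m$ rows of length $2n$. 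Extracting the null-space basis from the echelon form is an additional $O(n \cdot 2n) = O(n^2)$, which is absorbed into the stated bound (as $n \le m \cdot \min(m,n)$ whenever the output is nontrivial; and if $m < n$ one can first discard this concern since $\dim H \le m$).

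There is essentially no hard part here — the result is routine — but the one point that warrants care is the running-time bookkeeping: one must confirm that Gaussian elimination over $\F_2$ on a matrix of these dimensions really fits inside $O(mn \cdot \min(m,n))$ rather than, say, $O(mn \cdot \max(m,n))$. This follows because the number of elimination rounds is bounded by the rank, which is at most $\min(m, 2n)$, and each round is linear in the matrix's remaining size $O(mn)$; I would state this explicitly. One should also note that the number of returned basis vectors is $2n - \dim(H)$, which is the correct dimension of $H^\sympcomp$ since the symplectic form is non-degenerate (so $\dim H + \dim H^\sympcomp = 2n$), giving a sanity check that the output is a genuine basis and not merely a spanning set.
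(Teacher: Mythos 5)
The paper does not reprove this fact but cites it from an earlier work; your proof takes the standard approach that one would expect to find there. The reduction $H^\sympcomp = (JH)^\perp$ is exactly right, and the running-time accounting for Gaussian elimination over $\F_2$ is correct: $\min(m,2n) = O(\min(m,n))$ pivot rounds, each costing $O(mn)$.

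One small point worth tightening: your claim that the $O(n^2)$ cost of writing out the null-space basis is always absorbed into $O(mn\cdot\min(m,n))$ is not literally true for $m \ll n$ (take $m = O(1)$ and $n$ large: the null space has dimension $2n - O(1)$, so just writing the output takes $\Omega(n^2)$ time, exceeding the stated bound). The usual resolutions are either to assume $m = \Omega(n)$ — which always holds in the algorithms where this lemma is invoked, since the generating set comes from $\Omega(n)$ Bell-difference samples — or to note that the stated bound should really be read as $O(mn\cdot\min(m,n) + n^2)$, with the second term typically dominated. Your parenthetical attempt to dispatch this via $\dim H \le m$ does not actually do so, because a small $\dim H$ makes the complement (and hence the output) \emph{larger}, not smaller. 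This is a bookkeeping quibble that does not affect the correctness of the algorithm; the rest of the argument is sound.
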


\begin{lemma}[{\cite[Proof of Theorem 5.1]{grewal2023efficient}}]
\label{lem:sampling}
    Let $\ket{\psi}$ have stabilizer dimension at least $n - k$, and let $S$ be the symplectic complement of the space spanned by $\frac{2 \ln(1/\delta) + 4n}{\eps}$ samples from $q_\psi$, for some $\eps \in (0, 3/8)$. Then with probability at least $1 - \delta$, there exists a state $\ket{\phi}$ such that $S = \weyl(\ket \phi) \supseteq \weyl(\ket \psi)$ and $\abs{\braket{\psi|\phi}}^2 \ge 1 - \eps$.
\end{lemma}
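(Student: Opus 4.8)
\noindent\emph{Proof plan.} Write $V$ for the subspace of $\F_2^{2n}$ spanned by the Bell-difference samples, so $S = V^{\perp}$. I would split the argument into three parts: the deterministic inclusion $S \supseteq \weyl(\ket\psi)$; a bias computation plus a union bound showing that, with probability $\ge 1-\delta$, every $y \in S$ satisfies $\braket{\psi|W_y|\psi}^2 \ge 1-\eps$; and the construction of $\ket\phi$ from that guarantee.

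\emph{The inclusion.} First note $\supp(q_\psi) \subseteq \weyl(\ket\psi)^{\perp}$: if $s \in \weyl(\ket\psi)$ and $[a,s]=1$ then $W_s W_a W_s = -W_a$ while $W_s\ket\psi = \pm\ket\psi$, so $\braket{\psi|W_a|\psi} = -\braket{\psi|W_a|\psi} = 0$ and hence $p_\psi(a) = 0$; thus $\supp(p_\psi) \subseteq \weyl(\ket\psi)^{\perp}$, and since $q_\psi$ is the convolution $p_\psi * p_\psi$ over the group $\F_2^{2n}$ and $\weyl(\ket\psi)^{\perp}$ is a subspace, also $\supp(q_\psi) \subseteq \weyl(\ket\psi)^{\perp}$. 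Hence $V \subseteq \weyl(\ket\psi)^{\perp}$, and since taking symplectic complements reverses inclusions and is involutive on subspaces, $S = V^{\perp} \supseteq (\weyl(\ket\psi)^{\perp})^{\perp} = \weyl(\ket\psi)$, using that $\weyl(\ket\psi)$ is isotropic.

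\emph{The bias bound.} Using $W_y W_x W_y = (-1)^{[x,y]} W_x$ together with the identity $\sum_{x \in \F_2^{2n}} W_x \otimes W_x = 2^n\cdot\mathrm{SWAP}$, I would compute $\sum_x \braket{\psi|W_x|\psi}^2 (-1)^{[x,y]} = 2^n \braket{\psi|W_y|\psi}^2$, i.e. the symplectic Fourier transform of $p_\psi$ satisfies $\widehat{p_\psi}(y) = \braket{\psi|W_y|\psi}^2$; by the convolution theorem $\widehat{q_\psi}(y) = \braket{\psi|W_y|\psi}^4$, so $\Pr_{x \sim q_\psi}\bigl[[x,y]=1\bigr] = \tfrac12\bigl(1 - \braket{\psi|W_y|\psi}^4\bigr)$. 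Call $y$ \emph{bad} if $\braket{\psi|W_y|\psi}^2 < 1-\eps$; then $\braket{\psi|W_y|\psi}^4 < (1-\eps)^2 \le 1-\eps$, so a fixed bad $y$ satisfies $[x,y]=0$ with probability $< 1-\tfrac{\eps}{2}$ for each sample. Since $y \in S = V^{\perp}$ iff $y$ is symplectically orthogonal to all $m$ samples, a bad $y$ lies in $S$ with probability $< (1-\tfrac{\eps}{2})^m \le e^{-\eps m/2}$. With $m = \frac{2\ln(1/\delta)+4n}{\eps}$ and a union bound over the at most $2^{2n} \le e^{2n}$ vectors $y$, $\Pr[\exists\text{ bad }y \in S] < e^{2n}e^{-\eps m/2} \le \delta$. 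For the rest of the argument, condition on the complementary event: every $y \in S$ has $\braket{\psi|W_y|\psi}^2 \ge 1-\eps$.

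\emph{Constructing $\ket\phi$.} On this event $S$ is isotropic: two anticommuting $y_1, y_2 \in S$ would give anticommuting Hermitian involutions $W_{y_1}, W_{y_2}$, and expanding $\braket{\psi|(t_1 W_{y_1}+t_2 W_{y_2})^2|\psi} = t_1^2 + t_2^2$ with $\braket{\psi|M|\psi}^2 \le \braket{\psi|M^2|\psi}$ (at $t_i = \braket{\psi|W_{y_i}|\psi}$) yields $\braket{\psi|W_{y_1}|\psi}^2 + \braket{\psi|W_{y_2}|\psi}^2 \le 1$, contradicting $2(1-\eps) > 1$ since $\eps < \tfrac38$. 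Hence $\dim S \le n$, the $W_y$ $(y \in S)$ commute, and with $\epsilon_y := \sgn\braket{\psi|W_y|\psi}$ the operator $\Pi := \frac{1}{\abs{S}}\sum_{y\in S}\epsilon_y W_y$ is the projector onto a $2^{\,n-\dim S}$-dimensional stabilizer code satisfying $\braket{\psi|\Pi|\psi} = \frac{1}{\abs{S}}\sum_{y\in S}\abs{\braket{\psi|W_y|\psi}} \ge 1 - \tfrac{\abs{S}-1}{\abs{S}}\eps > 1-\eps$ (using $\abs{\braket{\psi|W_y|\psi}} \ge \braket{\psi|W_y|\psi}^2 \ge 1-\eps$ for $y \ne 0$; if $\abs{S}=1$ then $\weyl(\ket\psi) \subseteq S = \{0\}$ and one takes $\ket\phi = \ket\psi$). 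Then $\ket\phi := \Pi\ket\psi/\NORM{\Pi\ket\psi}$ has $\abs{\braket{\psi|\phi}}^2 = \braket{\psi|\Pi|\psi} > 1-\eps$ and $S \subseteq \weyl(\ket\phi)$; if $\dim S = n$ the code is one-dimensional so $\weyl(\ket\phi) = S$, and otherwise one perturbs $\ket\phi$ slightly within the code space to a (generic) state whose $\weyl$ equals $S$ exactly, the strict slack $\braket{\psi|\Pi|\psi} > 1-\eps$ ensuring $\abs{\braket{\psi|\phi}}^2 \ge 1-\eps$ is preserved.

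The main obstacle is the Fourier--bias identity $\widehat{q_\psi}(y) = \braket{\psi|W_y|\psi}^4$: getting the Weyl-operator phase conventions and the normalization of the symplectic Fourier transform to line up cleanly through the $\mathrm{SWAP}$ identity takes care, whereas the concentration bookkeeping and the ``$\weyl(\ket\phi) = S$ exactly'' step are routine by comparison.
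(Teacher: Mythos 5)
Your overall approach — proving the deterministic inclusion $S \supseteq \weyl(\ket\psi)$, computing the symplectic-Fourier bias $\hat q_\psi(y) = \braket{\psi|W_y|\psi}^4$ to bound the probability that a ``bad'' $y$ survives, union-bounding over $\F_2^{2n}$, and then constructing $\ket\phi$ by projecting $\ket\psi$ onto the stabilizer code determined by $S$ — is the right one, and as far as I can tell matches the argument in the cited reference. The Fourier identity, the uncertainty-relation proof of isotropy, and the sample-complexity bookkeeping all check out, and your analysis in fact works for any $\eps < \tfrac12$, subsuming the stated range $\eps < \tfrac38$.

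There is, however, one genuine gap in the last part. You assert that $\Pi := \tfrac{1}{\abs S}\sum_{y\in S}\epsilon_y W_y$ with $\epsilon_y := \sgn\braket{\psi|W_y|\psi}$ ``is the projector onto a $2^{n-\dim S}$-dimensional stabilizer code,'' but this requires that the signed operators $\{\epsilon_y W_y\}_{y\in S}$ are closed under multiplication, i.e.\ that $\epsilon_{y_1}\epsilon_{y_2}c_{y_1,y_2}=\epsilon_{y_1+y_2}$, where $W_{y_1}W_{y_2}=c_{y_1,y_2}W_{y_1+y_2}$ with $c_{y_1,y_2}\in\{\pm1\}$ on an isotropic $S$. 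Without this ``sign consistency'' the sum $\Pi$ need not square to itself, and the rest of the construction (in particular $\NORM{\Pi\ket\psi}^2=\braket{\psi|\Pi|\psi}$ and $S\subseteq\weyl(\ket\phi)$) collapses. The claim is true under your hypotheses and is not hard to supply: writing $W_{y_i}\ket\psi = \alpha_i\ket\psi + \beta_i\ket{u_i}$ with $\ket{u_i}\perp\ket\psi$, $\alpha_i=\braket{\psi|W_{y_i}|\psi}$, $\abs{\alpha_i}\ge\sqrt{1-\eps}$ and $\abs{\beta_i}\le\sqrt\eps$, one gets $\epsilon_{y_1}\epsilon_{y_2}\braket{\psi|W_{y_1}W_{y_2}|\psi} \ge (1-\eps)-\eps = 1-2\eps > 0$, so $\sgn\braket{\psi|W_{y_1}W_{y_2}|\psi}=\epsilon_{y_1}\epsilon_{y_2}$, which forces $\epsilon_{y_1}\epsilon_{y_2}c_{y_1,y_2}=\epsilon_{y_1+y_2}$. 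This step — not the Fourier normalization you flag — is the missing ingredient; you should state and prove it before invoking $\Pi$. The final perturbation to enforce $\weyl(\ket\phi)=S$ exactly is informal but fine: the states in the code space with strictly larger Weyl group form a measure-zero set, and the strict inequality $\braket{\psi|\Pi|\psi}>1-\eps$ leaves room.
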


We now show our main result, namely that \Cref{alg:main} is correct as specified.

 \begin{theorem}\label{thm:main-with-d}
    \Cref{alg:main} is correct and runs in time $O\left(\frac{n^3 + n^2 \log(1/\delta)}{\eps^2} \right)$.
 \end{theorem}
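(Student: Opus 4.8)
The plan is to verify the two claims in the theorem separately: correctness of the output $(\ell, u)$, and the running time bound. For correctness, I would first invoke \cref{lem:sampling} with the stated sample count $\frac{2\ln(1/\delta)+4n}{\eps^2}$: with probability at least $1-\delta$, the subspace $S$ computed in line 2 equals $\weyl(\ket\phi)$ for some state $\ket\phi$ with $\weyl(\ket\phi)\supseteq\weyl(\ket\psi)$ and $|\braket{\psi|\phi}|^2\ge 1-\eps$. Condition on this event for the rest of the argument. Since trace distance between pure states is $\sqrt{1-|\braket{\psi|\phi}|^2}\le\sqrt{\eps}$... actually I should be careful: I would instead note $\dist_\tr(\ket\psi,\ket\phi)=\sqrt{1-|\braket{\psi|\phi}|^2}\le\sqrt\eps$, but the algorithm uses $\eps$ directly in $r$, so I need to double-check whether the intended bound is $\eps$ or $\sqrt\eps$ — looking at \cref{lem:fannes} the perturbation is $\eps n + \binentropy(\eps)$, so the analysis must be treating $\dist_\tr(\ket\psi,\ket\phi)\le\eps$; I would reconcile this (perhaps the sample bound has $\eps^2$ precisely to make the fidelity gap $\eps^2$, hence trace distance $\le\eps$ — yes, \cref{lem:sampling} as quoted gives $1-\eps$ fidelity gap from $\frac{2\ln(1/\delta)+4n}{\eps}$ samples, so feeding $\eps^2$ worth of samples gives fidelity gap $\eps^2$ and trace distance $\le\eps$). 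So $\dist_\tr(\ket\psi,\ket\phi)\le\eps$.

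Next I would apply \cref{thm:main-entanglement-bounds} to $\ket\phi$ across both the cut $(A,B)$ and the symmetric reading, using that $S=\weyl(\ket\phi)$: this gives $\dim S - \dim S_B - |A| \le \entropy(\sigma_A) \le |A| - \dim S_A$, and by symmetry of entanglement entropy under swapping $A\leftrightarrow B$, also $\dim S - \dim S_A - |B| \le \entropy(\sigma_A) \le |B| - \dim S_B$, where $\sigma=\ketbra\phi\phi$. Taking the tighter of each pair yields $\ell_0 \le \entropy(\sigma_A)\le u_0$ where $u_0=\min\{|A|-\dim S_A,\,|B|-\dim S_B\}$ and $\ell_0=\max\{\dim S-\dim S_B-|A|,\,\dim S-\dim S_A-|B|\}$. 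Then \cref{lem:fannes} transfers this to $\ket\psi$: $|\entropy(\rho_A)-\entropy(\sigma_A)|\le \eps n+\binentropy(\eps)=r$ in the case $\dim S > n-k$, so $\ell_0-r\le\entropy(\rho_A)\le u_0+r$, which is exactly $(\ell,u)$. In the case $\dim S=n-k$, I would argue that $S=\weyl(\ket\psi)$ exactly: since $\weyl(\ket\psi)\subseteq S=\weyl(\ket\phi)$ and $\ket\psi$ has stabilizer dimension at least $n-k$ while $\dim S=n-k$, the containment forces equality, so we may take $\ket\phi=\ket\psi$ and apply \cref{thm:main-entanglement-bounds} directly to $\ket\psi$ with $r=0$.

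For the width bound $0\le u-\ell\le k+\max\{0,2(\eps n+\binentropy(\eps))-1\}$: clearly $\ell\le u$ since $\ell\le\entropy(\rho_A)\le u$. For the upper estimate on $u-\ell$, the key is to bound $u_0-\ell_0$. Here I expect the main subtlety — the same one flagged in the remarks after \cref{alg:main}. Using $\dim S_A + \dim S_B \ge \dim S - \text{(something)}$: more precisely, since $S$ has stabilizer dimension $\ge n-k$ (as $S=\weyl(\ket\phi)\supseteq\weyl(\ket\psi)$), and $S_A, S_B$ are the $A$- and $B$-local parts, one gets from \cref{thm:main-entanglement-bounds} applied to $\ket\phi$ that $(|A|-\dim S_A)-(\dim S-\dim S_B-|A|)$ and the symmetric expression are each at most... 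I would compute $u_0-\ell_0 = \min\{|A|-\dim S_A,|B|-\dim S_B\} - \max\{\dim S - \dim S_B-|A|,\dim S-\dim S_A-|B|\}$. Using $|A|+|B|=n$ and $\dim S\ge n-k$, and examining the four combinations (min picks $A$ or $B$ term, max picks $A$ or $B$ term), the cross terms give $u_0-\ell_0\le k$, while the "matching" terms give $u_0-\ell_0\le 2|A|-2\dim S_A - (\dim S - \dim S_B - |A|) $ type bounds — I would show each is $\le k$ using the entanglement sandwich $\ell_0\le u_0$ combined with the trivial fact that $S_A$ is isotropic inside a $2|A|$-dimensional symplectic space. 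Then the additive $r$ on each side contributes at most $2r=2(\eps n+\binentropy(\eps))$; the improvement to $2r-1$ when $\dim S>n-k$ comes from noting that in that case $\dim S \ge n-k+1$, which shaves a $1$ off the $k$ term in the above bounds (since the relevant inequality $v+k-r\le v$ in \cref{lemma:symplectic-subspace-inside} is integral), so $u_0-\ell_0\le k-1$ and total width $\le k-1+2r$; combined with the case $\dim S=n-k$ giving width $\le k$, the max gives the stated bound. Finally, for the running time: line 1 (Bell difference sampling) takes $O(n)$ per sample times $\frac{2\ln(1/\delta)+4n}{\eps^2}$ samples $= O\!\left(\frac{n^2+n\log(1/\delta)}{\eps^2}\right)$; line 2 computes a symplectic complement of $m=O\!\left(\frac{n+\log(1/\delta)}{\eps^2}\right)$ vectors in $\F_2^{2n}$, which by \cref{fact:compute-sympcomp} costs $O(mn\min(m,n))=O\!\left(\frac{(n+\log(1/\delta))n^2}{\eps^2}\right)=O\!\left(\frac{n^3+n^2\log(1/\delta)}{\eps^2}\right)$ (taking $\min(m,n)=n$, valid for the regime of interest; the other regime is dominated); lines 3–5 compute $S_A, S_B$ by Gaussian elimination on $O(n)$-dimensional systems, costing $\poly(n)$ which is absorbed. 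Summing gives the claimed $O\!\left(\frac{n^3+n^2\log(1/\delta)}{\eps^2}\right)$. The hardest part is getting the width bound's "$-1$" correct, i.e., carefully tracking the integrality that distinguishes $\dim S=n-k$ from $\dim S>n-k$; the rest is assembling the cited lemmas.
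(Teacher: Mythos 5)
Your proposal tracks the paper's proof essentially step for step: invoke \cref{lem:sampling} to get a surrogate $\ket\phi$ with $S=\weyl(\ket\phi)\supseteq\weyl(\ket\psi)$ and $\dist_\tr(\ket\psi,\ket\phi)\le\eps$, note that $\dim S=n-k$ forces $\ket\phi=\ket\psi$ (so $r=0$), bound $\entropy(\sigma_A)$ on both sides via \cref{thm:main-entanglement-bounds}, transfer to $\rho$ via \cref{lem:fannes}, and tally up the running time. One correction worth making to your width-bound plan: you do not need to examine all four min-max pair combinations, and indeed the ``matching'' pairs you flagged need not individually be $\le k$; since $\min\{a,b\}-\max\{c,d\}$ is upper-bounded by \emph{any} single choice of $a$-or-$b$ minus $c$-or-$d$, the paper just picks the cross pair $(|A|-\dim S_A)-(\dim S-\dim S_A-|B|)=n-\dim S$, in which the $\dim S_A$ terms cancel. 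Also, the $-1$ when $\dim S>n-k$ comes purely from integrality of $\dim S$ (so $\dim S\ge n-k+1$), not from anything in \cref{lemma:symplectic-subspace-inside}.
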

 \begin{proof}
    Let $S$ be the symplectic complement of the subspace spanned by our $\frac{2 \ln(1/\delta) + 4n}{\eps^2}$ samples from $q_\psi$.
    We note that these samples take $4$ copies of $\rho$ and $O(n)$ time each, and that $S$ can be computed in time $O\left(\frac{n^3 + n^2 \log(1/\delta)}{\eps^2} \right)$ by \cref{fact:compute-sympcomp}.
    
    By \Cref{lem:sampling}, with probability at least $1 - \delta$, there exists a state $\ket{\phi}$ such that $S = \weyl(\ket{\phi}) \supseteq \weyl(\ket \psi)$ and $\abs{\braket{\psi|\phi}}^2 \ge 1 - \eps^2$, and therefore $\dist_\tr(\ket{\psi}, \ket{\phi}) \le \eps$. Assume henceforth that $S$ and $\ket{\phi}$ satisfy these criteria.
    In fact, we can further assume $\dist_\tr(\ket{\psi}, \ket{\phi}) \le d$, where
    \[
    d \coloneqq \begin{cases}
        0 & \dim(S) = n - k,\\
        \eps & \dim(S) > n - k,
    \end{cases}
    \]
    because if $\dim(S) = n - k$, we must have $S = \weyl(\ket\psi)$, and hence we can choose $\ket\phi = \ket\psi$.

    Let $\sigma = \ketbra{\phi}{\phi}$. By \Cref{lem:upper}, we have
    \[
    u^\prime \coloneqq \min\{|A| - \dim S_A , |B| - \dim S_B \} \ge
    \entropy(\sigma_A).
    \]
    Similarly, \Cref{lem:lower} implies
    \[
    \ell^\prime \coloneqq \max\{\dim S - \dim S_B - |A|, \dim S - \dim S_A - |B|\} \le \entropy(\sigma_A).
    \]
    
    Note that $r = dn + \binentropy(d)$, $\ell = \ell^\prime - r$, and $u = u^\prime + r$.
    Recalling that $\dist_\tr(\ket{\psi}, \ket{\phi}) \le d$, \Cref{lem:fannes} implies that $\ell \le \entropy(\rho_A) \le  u$. So, this establishes that $u$ and $\ell$ are upper and lower bounds, respectively, on the entanglement entropy.
    
    It remains to bound the difference between $u$ and $\ell$. Observe that
    \begin{align*}
    u - \ell &= (u^\prime + r) - (\ell^\prime-r)\\
    &\le (|A| - \dim S_A) - (\dim S - \dim S_A - |B|) + 2r\\
    &= |A| + |B| - \dim S +2r \\
    &= n - \dim S +2r.
    \end{align*}
    
    In the case where $\dim S = n - k$, we have $u - \ell \le k$.
    Otherwise, when $\dim(S) > n - k$, we have 
    \begin{align*}
        n - \dim S +2r
        &\leq n - (n-k+1) + 2r\\
        &= k + 2r -1\\
        &= k + 2(\eps n + \binentropy(\eps)) - 1.
    \end{align*}
    
    In both cases, for all $S \supseteq \weyl(\ket \psi)$, we have 
    \[
    u - \ell \leq k + \max\left\{0,2r - 1\right\},
    \]
    which completes the proof.
 \end{proof}

 If we take $\eps$ to be sufficiently small, we can disregard the additional additive error $2r-1$.
 \begin{corollary}\label{cor:main-without-d}
     By setting $\eps = \frac{1}{8n}$, \cref{alg:main} outputs upper and lower bounds on the entanglement entropy $(u,\ell)$ such that $u-\ell \leq k$ with probability at least $1-\delta$. 
     It now uses $1024n^3 + 512 n^2 \ln(1/\delta)$ samples of $\ket{\psi}$ and runs in time $O\left(n^5 + n^4 \log(1/\delta) \right)$.
 \end{corollary}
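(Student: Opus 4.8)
The plan is to instantiate \cref{thm:main-with-d} with $\eps = \tfrac{1}{8n}$ and push the resulting parameters through. First I would verify the standing hypothesis of \cref{alg:main} and \cref{thm:main-with-d}, namely $\eps \in (0, 3/8)$, which holds for all $n \ge 1$. Then \cref{thm:main-with-d} immediately gives that, with probability at least $1-\delta$, the algorithm returns $(\ell, u)$ with $\ell \le \entropy(\rho_A) \le u$ and $u - \ell \le k + \max\{0,\, 2r - 1\}$, where $r = \eps n + \binentropy(\eps) = \tfrac18 + \binentropy\!\left(\tfrac{1}{8n}\right)$. So all that remains for the ``$u - \ell \le k$'' claim is to show the correction term $\max\{0, 2r-1\}$ is $0$, i.e., that $r \le \tfrac12$.

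Since $\eps n = \tfrac18$, this reduces to the bound $\binentropy\!\left(\tfrac{1}{8n}\right) \le \tfrac38$. I would dispatch $n \le 1$ as a degenerate case (one side of the cut is empty, so $\entropy(\rho_A) = 0$ and nothing is to be proved); for $n \ge 2$ we have $\tfrac{1}{8n} \le \tfrac1{16}$, and since $\binentropy$ is monotone increasing on $[0, \tfrac12]$ it suffices to bound $\binentropy\!\left(\tfrac1{16}\right)$. Here I would invoke \cref{fact:binary-entropy-bound}, which gives $\binentropy\!\left(\tfrac1{16}\right) \le \left(4 \cdot \tfrac1{16} \cdot \tfrac{15}{16}\right)^{1/\ln 4} < \tfrac38$ (or, slightly more loosely, $\le e \cdot (\tfrac1{16})^{0.72} < \tfrac38$). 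Hence $r < \tfrac12$, the correction term vanishes, and $u - \ell \le k$.

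Finally I would substitute $\eps^2 = \tfrac1{64 n^2}$ into the two resource bounds already established in \cref{thm:main-with-d}: the copy count $\tfrac{8\ln(1/\delta) + 16n}{\eps^2} = 1024 n^3 + 512 n^2 \ln(1/\delta)$, and the running time $O\!\left(\tfrac{n^3 + n^2 \log(1/\delta)}{\eps^2}\right) = O\!\left(n^5 + n^4 \log(1/\delta)\right)$. There is essentially no obstacle here, since all the substance is in \cref{thm:main-with-d}; the only step calling for a bit of care is the numerical inequality $\binentropy(1/16) < 3/8$, which is fairly tight, so I would prefer to derive it from the sharper first estimate in \cref{fact:binary-entropy-bound} rather than the cruder $e \cdot p^{0.72}$ bound, to leave a comfortable margin.
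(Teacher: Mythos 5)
Your proposal is correct and follows essentially the same route as the paper: assume $n \ge 2$ to dispose of the trivial single-qubit case, use the monotonicity of $\binentropy$ together with \cref{fact:binary-entropy-bound} to bound $\binentropy(1/(8n)) \le \binentropy(1/16) < 3/8$ so that $2r - 1 < 0$, then substitute $\eps = \tfrac{1}{8n}$ into the sample- and time-bounds from \cref{thm:main-with-d}. The paper uses the cruder numerical bound $\binentropy(\eps) < 0.37$; your preference for the sharper $\left(4p(1-p)\right)^{1/\ln 4}$ estimate is a sensible stylistic choice but not a substantive difference.
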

\begin{proof}
    Assume $n \geq 2$, because we don't need to compute the entanglement of a single qubit state.
    \Cref{fact:binary-entropy-bound} tells us that $\binentropy(\eps) < 0.37$, and therefore $2(\eps n + H(\eps)) - 1 < 2(1/8 + 0.37) - 1 < 0$. So, $\max\left\{0, 2(\eps n + H(\eps)) - 1\right\} = 0$.
    We then appeal to \cref{thm:main-with-d}.
\end{proof}

As a corollary, we can show a lower bound on the number of non-Clifford gates necessary to prepare pseudoentangled states. 

\begin{corollary}\label{cor:pes-lowerbound}
    Any family of Clifford circuits that produces a pseudoentangled ensemble $\{\ket{\Psi_k}, \ket{\Phi_k}\}_k$ with entropy gap $f(n) \text{ vs. } g(n)$ satisfying $f(n) - g(n) \ge t$ must use $\Omega(t)$ auxiliary non-Clifford single-qubit gates.
\end{corollary}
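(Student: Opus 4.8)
The plan is to prove the contrapositive: if some family of Clifford circuits, each using at most $s$ auxiliary non-Clifford single-qubit gates, prepares both ensembles $\{\ket{\Psi_k}\}_k$ and $\{\ket{\Phi_k}\}_k$, then the pseudoentanglement gap $f(n)-g(n)$ is $O(s)$; equivalently, a gap of at least $t$ forces $s = \Omega(t)$. The key point is that the entropy-estimation algorithm of \cref{cor:main-without-d} doubles as an efficient distinguisher between the two ensembles whenever their entanglement entropies across a common cut are well separated.

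To set this up, fix a bipartition $A \sqcup B = [n]$ of size $\Omega(f(n))$ (for instance a balanced cut when $f(n) = O(n)$), so that the pseudoentanglement promise applies to it. By \cref{lem:stab_dim}, every $\ket{\Psi_k}$ and every $\ket{\Phi_k}$ has stabilizer dimension at least $n - 2s$. Let $p(n) = O(n^3)$ be the copy complexity of \cref{cor:main-without-d} with $\delta$ a small constant, and define a polynomial-time quantum algorithm $\calA$ that, on input $\ketbra{\chi}{\chi}^{\otimes p(n)}$, runs \cref{alg:main} on the cut $(A,B)$ with parameter $k := 2s$ to obtain bounds $(\ell, u)$, and outputs $1$ if the midpoint $\tfrac{u+\ell}{2}$ exceeds a threshold $\tau$ and $0$ otherwise. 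By \cref{cor:main-without-d}, for any $\ket{\chi}$ of stabilizer dimension at least $n - 2s$, with probability at least $1-\delta$ the bounds satisfy $\ell \le \entropy(\chi_A) \le u$ and $u - \ell \le 2s$, so $\tfrac{u+\ell}{2}$ is within $s$ of the true entanglement entropy $\entropy(\chi_A)$.

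Next, instantiate $\calA$ on $\rho = \E_k[\ketbra{\Psi_k}{\Psi_k}^{\otimes p(n)}]$ and $\sigma = \E_k[\ketbra{\Phi_k}{\Phi_k}^{\otimes p(n)}]$, which is admissible in the pseudoentanglement definition because $p(n)$ is a polynomial. By the pseudoentanglement promise, for all but a $1/\poly(n)$ fraction of keys $k$ we have $\entropy((\Psi_k)_A) = \Theta(f(n))$ and $\entropy((\Phi_k)_A) = \Theta(g(n))$. Conditioning on such a key and on \cref{alg:main} succeeding, $\calA$'s output lies within $s$ of $\Theta(f(n))$ on $\rho$ and within $s$ of $\Theta(g(n))$ on $\sigma$; choosing $\tau$ to fall between these two windows, $\calA$ accepts $\rho$ and rejects $\sigma$ with overwhelming conditional probability, so it distinguishes $\rho$ from $\sigma$ with advantage $\Omega(1)$ (taking $\delta$ a sufficiently small constant), contradicting pseudoentanglement. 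The windows are disjoint precisely when $\entropy((\Psi_k)_A) - \entropy((\Phi_k)_A) > 2s$, which, since these entropies are $\Theta(f(n))$ and $\Theta(g(n))$ and $f(n) - g(n) \ge t$, fails only if $s = \Omega(t)$. This proves the claim.

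Because the argument is a reduction, I do not expect a deep obstacle; the effort is in the bookkeeping. The steps needing care are the choice of threshold $\tau$ and the verification that the distinguishing advantage stays non-negligible after accounting for (i) the algorithm's failure probability $\delta$, (ii) the fraction of atypical keys excluded by the pseudoentanglement promise, and (iii) the constants hidden inside $\entropy((\Psi_k)_A) = \Theta(f(n))$ and $\entropy((\Phi_k)_A) = \Theta(g(n))$ --- it is the last of these that makes the conclusion $\Omega(t)$ rather than an exact $t/2$, and it requires the gap to be genuine, which is automatic in the optimal regime $f(n) = \Theta(n)$, $g(n) = \omega(\log n)$ where $g(n) = o(f(n))$. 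I would also note that \cref{alg:main} performs Bell difference sampling only once and runs in polynomial time, so $\calA$ is a bona fide polynomial-time quantum adversary, and that $s$, $f$, $g$, and $\tau$ are all fixed in the hypothetical and hence can be hardwired into $\calA$ (or dealt with by nonuniformity).
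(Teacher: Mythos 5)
Your proposal is correct and follows essentially the same reduction as the paper: both use \cref{lem:stab_dim} to bound the stabilizer dimension by $n-2s$, run \cref{alg:main} via \cref{cor:main-without-d} to get $(\ell,u)$ with $u-\ell \le 2s$ on a fixed cut, and conclude that the narrow interval can separate the two entropy scales only if $s = \Omega(t)$. The only cosmetic difference is the decision rule (you threshold the midpoint $\tfrac{u+\ell}{2}$ against a fixed $\tau$; the paper tests whether $f(n) \in [\ell,u]$), but these are interchangeable given $u-\ell \le 2s$.
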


\begin{proof}
    Suppose $t'$ non-Clifford gates are used to construct $\{\ket{\Psi_k}\}_k$ and $\{\ket{\Phi_k}\}_k$. We argue that if $2t' < t$, these state ensembles can be distinguished with non-negligible advantage in polynomial time.

    All such states $\ket{\Psi_k}$ and $\ket{\Phi_k}$ have stabilizer dimension at least $n - 2t'$, by \Cref{lem:stab_dim}. 
    The distinguisher, then, is the following: given copies of an unknown $\ket{\psi}$ belonging to one of the two ensembles, run \Cref{alg:main} according to \Cref{cor:main-without-d}, assuming stabilizer dimension at least $n - 2t'$ and $\delta = 1/3$. This produces bounds $(u, \ell)$ on the entanglement entropy of $\ket{\psi}$ across some fixed cut $(A, B)$ of size $n/2$. Then, output that $\ket{\psi} \in \{\ket{\Psi_k}\}_k$ if $\ell \le f(n) \le u$, and output $\ket{\psi} \in \{\ket{\Phi_k}\}_k$ otherwise. The algorithm guesses correctly with probability at least $2/3$, because $u - \ell \le 2t'$, so at most one of $f(n)$ and $g(n)$ can lie between $u$ and $\ell$.
\end{proof}

\section*{Acknowledgments}
We thank Tony Metger for suggesting this problem to us, and Fermi Ma for helpful conversations.

SG is supported (via Scott Aaronson) by a Vannevar Bush Fellowship from the US Department
of Defense, the Berkeley NSF-QLCI CIQC Center, a Simons Investigator Award, and the Simons “It
from Qubit” collaboration. 
VI is supported by an NSF Graduate Research Fellowship.
WK acknowledges support from the U.S.\ Department of Energy, Office of Science, National
Quantum Information Science Research Centers, Quantum Systems Accelerator. 
DL is supported by NSF award FET-2243659.

This work was done in part while SG, VI, and DL were visiting the Simons Institute for the Theory of Computing.

\bibliographystyle{alphaurl}
\bibliography{refs}

\appendix

\end{document}